\documentclass[a4paper,12pt]{article}
\usepackage{test}

\usepackage{xcolor}
\definecolor{pantone}{RGB}{1, 33, 105}

\usepackage{setspace} 
\usepackage{hyperref}
\hypersetup{colorlinks,linkcolor=blue,citecolor=blue,urlcolor=pantone}
\usepackage{doi}

\usepackage[style=ext-authoryear-comp,
sorting=nyt, 
sortcites=false, 
dashed=false, 
maxcitenames=2, 
maxbibnames=99, 
uniquelist=false,
uniquename=false,
giveninits=true, 
natbib, 
date=year 
]{biblatex}

\AtBeginRefsection{\GenRefcontextData{sorting=ynt}}
\AtEveryCite{\localrefcontext[sorting=ynt]}

\DeclareFieldFormat{pages}{#1} 
\renewbibmacro{in:}{\ifentrytype{article}{}{\printtext{\bibstring{in}\intitlepunct}}} 

\DeclareFieldFormat[article,inbook,incollection,inproceedings,patent,thesis,unpublished]{titlecase:title}{\MakeSentenceCase*{#1}} 

\newbibmacro*{atoda:titlelink}[1]{%
  \ifhyperref
    {\iffieldundef{doi}
       {\iffieldundef{eprint}
          {\iffieldundef{url}
             {#1}
             {\href{\thefield{url}}{#1}}}
          {\href{https://arxiv.org/abs/\thefield{eprint}}{#1}}}
       {\href{https://doi.org/\thefield{doi}}{#1}}}
    {#1}}

\DeclareFieldFormat{title}{\usebibmacro{atoda:titlelink}{\mkbibemph{#1}}}
\DeclareFieldFormat
  [article,inbook,incollection,inproceedings,patent,thesis,unpublished]
  {title}{\usebibmacro{atoda:titlelink}{\mkbibquote{#1\isdot}}}

\renewbibmacro*{doi+eprint+url}{}

\usepackage[inline,shortlabels]{enumitem}
\setlist[enumerate,1]{label=(\roman*)}

\usepackage{booktabs}
\usepackage{tabularx}
\usepackage{caption}
\usepackage{pgfplots}
\pgfplotsset{compat=1.18}
\usepgfplotslibrary{groupplots}

\usepackage[margin = 1.25in]{geometry}
\newcommand{\cA}{\mathcal{A}}
\newcommand{\cD}{\mathcal{D}}
\newcommand{\cF}{\mathcal{F}}
\newcommand{\cH}{\mathcal{H}}
\newcommand{\cX}{\mathcal{X}}
\newcommand{\OS}{\mathcal{OS}}

\title{Conditional Impatience and Concavity of Consumption Functions}
\author{Alexis Akira Toda\thanks{Department of Economics, Emory University and Research Institute for Economics and Business Administration, Kobe University. Email: \href{mailto:alexis.akira.toda@emory.edu}{alexis.akira.toda@emory.edu}.}}

\begin{document}
\maketitle

\begin{abstract}
Concave consumption functions imply a marginal propensity to consume that falls with wealth. I characterize the utility functions that guarantee this property in finite-horizon optimal saving problems with stochastic discounting, returns, income, and borrowing limits. Under conditional impatience---the conditional expected discounted gross return does not exceed one---consumption functions are always concave if and only if inverse absolute prudence, $-u''/u'''$, is concave. When no conditional-impatience restriction is imposed, hyperbolic absolute risk aversion (HARA) is necessary and sufficient for uniform concavity. Thus conditional impatience permits declining marginal propensities to consume for a preference class strictly larger than HARA.

\medskip

\noindent
\textbf{Keywords:} conditional impatience, consumption function, HARA, inverse absolute prudence, marginal propensity to consume, optimal saving.

\medskip
\noindent
\textbf{JEL codes:} D15, D81, E21
\end{abstract}

\section{Introduction}

How strongly consumption responds to available resources is central to both individual saving behavior and aggregate fluctuations. Empirical estimates show that marginal propensities to consume vary systematically with household balance sheets and the size of income shocks \citep{FagerengHolmNatvik2021}. In heterogeneous-agent models, their distribution helps determine the aggregate consumption response to shocks \citep{KaplanViolante2022}; differences in marginal propensities to consume also shape the redistribution channel of monetary policy \citep{Auclert2019}. A household consumption function that is concave in wealth provides a structural source of such heterogeneity: the same increment to current resources produces a larger increase in consumption at low wealth than at high wealth. Consequently, the distribution of resources can affect aggregate spending at a fixed level of total resources, and transfers of equal size can induce heterogeneous consumption responses. This property formalizes an idea going back to \citet[p.~120]{Keynes1936}, who argued that the marginal propensity to consume tends to decline as employment and real income rise. Concavity also changes consumption dynamics relative to the linear benchmark and plays an important role in buffer-stock saving and wealth-dependent responses to risk \citep{Kimball1990,CarrollKimball1996,CarrollKimball2008,Toda2021JME}.

Existing theory identifies hyperbolic absolute risk aversion (HARA) utility as the benchmark for uniform consumption concavity: consumption functions are concave in every optimal saving problem within a specified class. \citet{CarrollKimball1996} show that HARA guarantees this property in a broad class of finite-horizon saving problems under uncertainty, and \citet*{MaStachurskiToda2020JET} extend the underlying concavity-preservation argument to broader recursive environments. Conversely, \citet{Toda2021JME} proves that constant relative risk aversion (CRRA) is necessary on the consumption domain $C=(0,\infty)$ when discounting, returns, and income are unrestricted and explains how translating the domain yields HARA more generally. Together, these results suggest that, without a restriction on saving incentives, optimal saving does not robustly produce a declining marginal propensity to consume outside HARA. Can an economically relevant restriction support a larger class of utility functions?

Impatience provides such a restriction. Allowing income to be stochastic, \citet[Corollary~1]{ChamberlainWilson2000} show that consumption diverges in an infinite-horizon saving problem with a constant discount factor and gross return satisfying $\beta R>1$. Using this result, \citet[proof of Theorem~8]{StachurskiToda2019JET} show that market clearing in a stationary Bewley--Huggett--Aiyagari economy with a finite-mean wealth distribution requires $\beta R\le 1$. When discount factors and returns are stochastic, the natural one-step analogue is $\E_t[\beta_{t+1}R_{t+1}]\le 1$ almost surely at every date, a condition I call \emph{conditional impatience}. It restricts only the conditional mean of the discounted gross return and therefore permits $\beta_{t+1}R_{t+1}>1$ in some future states. In this paper, I ask which utility functions guarantee concavity when saving incentives satisfy this restriction and provide an exact answer.

I study a finite-horizon problem with jointly stochastic discount factors, gross returns, non-financial income, and borrowing limits. Under standard monotonicity, concavity, prudence, and Inada conditions on utility, consumption is concave in wealth in every conditionally impatient problem if and only if inverse absolute prudence, $-u''/u'''$, is concave. For HARA utility, inverse absolute prudence is affine. Conditional impatience therefore enlarges the class of utility functions from HARA to all utilities with concave inverse absolute prudence. The characterization holds history by history with stochastic borrowing limits, arbitrary dependence among the primitive shocks, and shocks with possibly infinite support. The curvature condition can be verified directly from utility without solving the household problem. Example~\ref{exmp:nonHARA} illustrates the larger class with a non-HARA family that nests CRRA utility.

The unrestricted result shows exactly what conditional impatience changes. If discounted returns face no restriction, consumption is concave in every saving problem if and only if utility is HARA. Moreover, the failure outside HARA does not require a long horizon, uncertainty, or a borrowing constraint. I give a deterministic one-period problem with unrestricted borrowing in which the non-HARA utility with marginal utility $u'(x)=\log(1+1/x)$ has strictly concave inverse absolute prudence but generates a strictly \emph{convex} consumption function when $\beta R=2$. This example shows that concave inverse absolute prudence alone does not guarantee concave consumption when conditional impatience is dropped, even in a deterministic problem with unrestricted borrowing.

I prove both theorems by studying how the Euler equation aggregates marginal utility across future states. Under conditional impatience, the relevant aggregation weights sum to at most one. I show that this aggregation preserves concavity exactly when inverse absolute prudence is concave, and backward induction then yields sufficiency. For necessity, I embed arbitrary finite-dimensional aggregation problems in one-period saving problems with finite-support shocks. Unrestricted weights force inverse absolute prudence to be affine and recover HARA. The two characterization theorems therefore follow from the same aggregation argument.

I build most directly on \citet{CarrollKimball1996}, \citet{SuenConcave}, and \citet{Toda2021JME}. Relative to \citet{CarrollKimball1996}, I characterize the larger class of utility functions supported by conditional impatience. \citet{SuenConcave} proves sufficiency of concave inverse absolute prudence under $\beta R\le 1$ in a finite-horizon model with a constant discount factor and risk-free return, deterministic age-dependent borrowing limits, and independent permanent and transitory income shocks with compact support. I establish necessity and allow discounting, returns, income, and borrowing limits to be jointly stochastic. Relative to \citet{Toda2021JME}, I state the unrestricted HARA benchmark in the present setting and identify precisely how conditional impatience changes it. Related work includes \citet*{MaStachurskiToda2020JET} and \citet*{CarrollHolmKimball2021}. Section~\ref{subsec:literature} gives a detailed comparison with this literature.

To isolate the role of uncertainty, I also examine perfect foresight with unrestricted borrowing. Under the discounted-return restrictions considered in Section~\ref{sec:perfectForesight}, consumption curvature is characterized by the curvature of absolute risk tolerance, $-u'/u''$. The analysis explains the deterministic example, corrects a claim in the literature, and shows that the perfect-foresight characterizations do not extend to uncertainty even when the same restrictions on discounted returns hold almost surely.

Section~\ref{sec:model} formulates the optimal saving problem, states the two characterization theorems, presents the analytical example, and discusses the relation to existing results. Section~\ref{sec:perfectForesight} examines perfect foresight and compares its curvature characterizations with those under uncertainty. Section~\ref{sec:conclusion} concludes. The appendix develops the aggregation lemmas and proves the results.

\section{Optimal saving and concavity of consumption}\label{sec:model}

\subsection{Optimal saving problem}\label{subsec:problem}

I consider the following optimal saving problem:
\begin{subequations}\label{eq:os}
    \begin{align}
        &\maximize && \E_0\sum_{t=0}^T\left(\prod_{j=0}^t\beta_j\right)u(c_t) \label{eq:os_obj}\\
        &\st && (\forall t) c_t\in C, \label{eq:os_feasible}\\
        &&& (\forall t) c_t\le w_t+b_t, \label{eq:os_borrow}\\
        &&& (\forall t<T) w_{t+1}=R_{t+1}(w_t-c_t)+Y_{t+1}, \label{eq:os_budget}
    \end{align}
\end{subequations}
where initial wealth $w_0$ is given. The agent maximizes expected discounted utility \eqref{eq:os_obj}. Constraint \eqref{eq:os_feasible} restricts consumption $c_t$ to the domain $C$ of the utility function $u$, while \eqref{eq:os_borrow} limits it to current wealth $w_t$ plus the borrowing limit $b_t\ge 0$. Equation \eqref{eq:os_budget} gives next-period wealth as savings $w_t-c_t$ multiplied by the gross return $R_{t+1}$, plus non-financial income $Y_{t+1}$.

In the objective function \eqref{eq:os_obj}, $T\in \set{0,1,\dotsc}$ is the horizon of the agent, $\beta_0\equiv 1$, $\beta_t>0$ is the discount factor between time $t-1$ and time $t$, and $u$ is the period utility function with consumption domain $C\subset\R$. I maintain the following assumption throughout.

\begin{asmp}[Utility]\label{asmp:utility}
The set $C$ is a nonempty open interval, and $u:C\to\R$ is three times continuously differentiable with $u'>0$, $u''<0$, $u'''>0$. Moreover, $u$ satisfies the Inada conditions
\begin{equation}
    \lim_{x\downarrow \inf C}u'(x)=\infty \quad \text{and} \quad \lim_{x\uparrow \sup C}u'(x)=0. \label{eq:Inada}
\end{equation}
\end{asmp}

Recall that the utility function $u$ exhibits \emph{hyperbolic absolute risk aversion} (HARA) if absolute risk aversion is hyperbolic, that is,
\begin{equation}
    -\frac{u''(x)}{u'(x)}=\frac{1}{ax+b} \label{eq:ARA}
\end{equation}
for some constants $a,b$, where the domain of $u$ is
\begin{equation*}
	C\coloneqq \set{x\in \R:ax+b>0}.
\end{equation*}
Integrating the differential equation \eqref{eq:ARA} twice shows that HARA utilities are completely characterized as
\begin{equation}
    u(x)=\begin{cases*}
        \frac{1}{a-1}(ax+b)^{1-1/a} & if $a\neq 0,1$,\\
        \log(x+b) & if $a=1$,\\
        -b\e^{-x/b} & if $a=0$, $b>0$
    \end{cases*}\label{eq:HARA}
\end{equation}
up to a positive affine transformation, which does not affect expected-utility rankings. Furthermore, the domain of $u$ is
\begin{equation*}
	C=\begin{cases*}
		(-b/a,\infty) & if $a>0$,\\
		(-\infty,\infty) & if $a=0$,\\
		(-\infty,-b/a) & if $a<0$.
	\end{cases*}
\end{equation*}
Every HARA utility in \eqref{eq:HARA} with $u'''>0$, equivalently with $a>-1$, satisfies Assumption~\ref{asmp:utility}. \citet{Kimball1990} defines absolute prudence as $-u'''/u''$; I call its reciprocal \emph{inverse absolute prudence} and denote it by
\begin{equation}
	\psi(x)\coloneqq-\frac{u''(x)}{u'''(x)},\qquad x\in C. \label{eq:psi}
\end{equation}
For HARA utility, \eqref{eq:HARA} implies
\begin{equation}
	\psi(x)=\frac{ax+b}{a+1}, \label{eq:HARA_inverseabsoluteprudence}
\end{equation}
which is affine.

In the borrowing constraint \eqref{eq:os_borrow}, $w_t$ is wealth at the beginning of time $t$. Here and throughout, \emph{wealth} means cash-on-hand after the current return and non-financial income are realized. The variable $b_t\in[0,\infty]$ is the borrowing limit. I set $b_T\equiv 0$, which rules out borrowing in the last period. In the budget constraint \eqref{eq:os_budget}, $R_t>0$ is the gross return on savings between time $t-1$ and time $t$, and $Y_t\ge 0$ is non-financial income that arrives at the beginning of time $t$. I use the following stochastic specification. Let $(\Omega,\cF,\mathrm{P})$ be a probability space, let $\set{\cF_t}_{t=0}^T$ be a filtration, and let $\E_t$ denote conditional expectation given $\cF_t$. The variables $\beta_t$, $R_t$, $Y_t$, and $b_t$ are $\cF_t$-measurable.

An adapted wealth-consumption process $\set{(w_t,c_t)}_{t=0}^T$ is \emph{feasible} if it satisfies the constraints \eqref{eq:os_feasible}--\eqref{eq:os_budget}. A process $\set{(w_t^*,c_t^*)}_{t=0}^T$ is \emph{optimal}, or a \emph{solution} to the optimal saving problem \eqref{eq:os}, if it maximizes the objective function \eqref{eq:os_obj} among all feasible processes. For each $t$ and history $\omega$, let $D_t(\omega)\subset\R$ be the wealth domain on which the time-$t$ continuation problem admits a solution, and define the joint domain
\begin{equation*}
    \cD_t\coloneqq\set{(w,\omega)\in\R\times\Omega:w\in D_t(\omega)}.
\end{equation*}
Throughout, I restrict attention to optimal saving problems for which the domains $D_t(\omega)$ are nonempty intervals and the conditional expectations used below are finite. With a slight abuse of notation, I call $c_t:\cD_t\to C$ a (time $t$) \emph{consumption function} if $c_t^*=c_t(w_t,\omega)$ represents the solution. Concavity of $c_t$ means that $c_t(\cdot,\omega)$ is concave on $D_t(\omega)$ for each history $\omega$. For a fixed history, I suppress the state argument and write $c_t(w)$.

\subsection{Main results}\label{subsec:results}

For compactness, I denote a class of optimal saving problems by $\OS(\bullet)$, where $\bullet$ indicates particular specifications such as $T=1$ (one-period model). A tilde indicates that a primitive can be stochastic, and restrictions following a semicolon further delimit the class. Thus $\OS(\tilde{\beta},\tilde{R},\tilde{Y},\tilde{b})$ denotes the general class of problems. If borrowing is freely allowed, so $b_t=\infty$ for all $t<T$, I write $\OS(\tilde{\beta},\tilde{R},\tilde{Y},\infty)$.

I first state the unrestricted benchmark in the present setting. Its sufficiency follows from the concavity-preservation argument of \citet{MaStachurskiToda2020JET}, and its necessity follows from the main result and HARA domain-translation argument of \citet{Toda2021JME}. Section~\ref{subsec:literature} discusses these connections. The benchmark provides a point of comparison for the main contribution below.

\begin{thm}[Concavity characterization: unrestricted case]\label{thm:unrestricted}
Under Assumption~\ref{asmp:utility}, the consumption functions are concave in every optimal saving problem in $\OS(\tilde{\beta},\tilde{R},\tilde{Y},\tilde{b})$ if and only if $u$ is HARA.
\end{thm}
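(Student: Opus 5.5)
The plan is to work with marginal utility through the Euler equation and to reduce both directions to a statement about how one aggregation operator acts on concave functions. Fix a history and a date $t<T$, and suppose $c_{t+1}(\cdot,\omega')$ is concave for every continuation history $\omega'$. When the borrowing constraint is slack, the Euler equation reads
\begin{equation*}
u'(c)=\E_t\bigl[\beta_{t+1}R_{t+1}u'(c_{t+1}(R_{t+1}(w-c)+Y_{t+1}))\bigr].
\end{equation*}
I will change variables through the inverse marginal utility $g\coloneqq(u')^{-1}$. Writing $s=w-c$ for savings, define
\begin{equation*}
\Phi(s)\coloneqq g\Bigl(\E_t\bigl[\beta_{t+1}R_{t+1}\,u'(c_{t+1}(R_{t+1}s+Y_{t+1}))\bigr]\Bigr).
\end{equation*}
Consumption at an interior solution is then determined by $c=\Phi(w-c)$.

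The first structural fact is elementary. If $\Phi$ is concave and increasing, then the map $w\mapsto c$ defined implicitly by $c=\Phi(w-c)$ is concave. This follows because the hypograph of $w\mapsto c$ is convex, or alternatively by differentiating $c'=\Phi'/(1+\Phi')$. Imposing the borrowing limit replaces $c$ by $\min\{c,\,w+b_t\}$, which is a minimum of two concave functions and so preserves concavity. The whole sufficiency argument therefore reduces to showing that $\Phi$ is concave. The final step is backward induction from $c_T(w)=w$, which is concave since $b_T=0$ and the Inada conditions hold.

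\textbf{Sufficiency (HARA).} Concavity of $\Phi$ amounts to an aggregation lemma. For concave $h_i$ and arbitrary positive weights $\lambda_i$ (or a general measure), I need the map
\begin{equation*}
s\mapsto g\Bigl(\sum_i\lambda_i\,u'(h_i(s))\Bigr)
\end{equation*}
to be concave. For HARA with $a\neq0$ we have $u'(x)=(ax+b)^{-1/a}$, so the expression equals $\frac1a\bigl[(\sum_i\lambda_i(ah_i+b)^{-1/a})^{-a}-b\bigr]$, a generalized power mean. The reverse-Minkowski or Hölder inequality gives concavity in each relevant sign case, using that $ah_i+b>0$ is concave when $a>0$ and using the appropriate monotonicity when $a<0$. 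The exponential case $a=0$ gives $-b\log\sum_i\lambda_i e^{-h_i/b}$, a negated log-sum-exp of concave functions composed with a decreasing convex map, which is again concave. Measurability and infinite support are handled by applying the inequality pointwise inside the expectation, which is a Jensen/Minkowski integral form. This mirrors the Ma--Stachurski--Toda preservation argument, which I would cite for the technical details.

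\textbf{Necessity.} This is the main obstacle. I would use one-period problems, $T=1$, with finite-support shocks, $b_0=\infty$ and $b_1=0$, so that $c_1(w)=w$. Then $\Phi(s)=g\bigl(\sum_i p_i\beta_iR_i\,u'(R_is+Y_i)\bigr)$, and the weights $\lambda_i=p_i\beta_iR_i$, slopes $R_i$ and intercepts $Y_i$ can be chosen freely because the problem places no restriction on discounted returns. Concavity of $c_0$ for every such problem is equivalent to concavity of $\Phi$ for every choice of these parameters, since $c$ is concave in $w$ if and only if $\Phi$ is concave in $s$. I will take two states with equal slope, $R_1=R_2$, and distinct intercepts, and compute the second derivative of $g(\lambda_1u'(x_1)+\lambda_2u'(x_2))$ along $x_i=Rs+Y_i$. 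The calculation expresses $\Phi''$ through $\psi$. Local concavity for all nearby $x_1,x_2$ and weights then forces $\psi$ to be concave; this is the conditionally impatient part. Letting the total weight $\lambda_1+\lambda_2$ be arbitrarily large or small, which is possible only without impatience, forces the reverse inequality as well. The cleaner route is to scale a single state: take $\Phi(s)=g(\lambda u'(Rs+Y))$. Concavity of this map for every $\lambda>0$ means that the map $x\mapsto g(\lambda u'(x))$ is concave for all $\lambda$. Taking $\lambda$ both above and below $1$ makes both this map and its inverse, which is $g(\lambda^{-1}u'(\cdot))$, concave, so the map is affine in $x$. Hence $u'(x)$ and $\lambda u'(x)$ are related by an affine change of variables for every $\lambda$. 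Differentiating in $\lambda$ at $\lambda=1$ gives $u'/u''$ affine, that is, absolute risk tolerance is affine, which is exactly HARA \eqref{eq:ARA}. I expect the work to lie in justifying the domain issues, namely that $D_0$ is an interval and the Inada conditions ensure an interior solution, and in handling the $a=0$ and $a<0$ cases. This matches Toda's CRRA necessity argument combined with domain translation.
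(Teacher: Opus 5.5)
Your proof is correct, but both halves differ from the paper's. Note that your $\Phi$ and $g$ play the roles of the paper's $g$ and $\phi^{-1}$; the paper's $\Phi$ is a different auxiliary function.

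\textbf{Sufficiency.} The paper does not work with the HARA closed form directly. It shows that the $\Phi$ of Lemma~\ref{lem:Phi} is linear, $\Phi(m)=m/(a+1)$, and proves concavity of the finite aggregator by a Hessian and Cauchy--Schwarz computation. It then reaches arbitrary conditional distributions by truncation, simple-function approximation, and dominated convergence (Lemma~\ref{lem:multivariateAggregator}). Its backward induction on saving functions is equivalent to your $\min\{c^\circ(w),w+b_t\}$ step. Your power-mean and log-sum-exp argument, using the integral (reverse) Minkowski and H\"older inequalities, is more elementary and handles general measures directly. However, it needs a case split and is tied to HARA, whereas the paper's lemma also proves Theorem~\ref{thm:main} by the same argument.

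\textbf{Necessity.} You use the same one-state test problems as the paper, with $\lambda=\beta R$ free. The paper evaluates a second derivative to obtain $\Phi(\beta m)\ge\beta\Phi(m)$ and then swaps $(\beta,m)\mapsto(1/\beta,\beta m)$. You observe instead that $x\mapsto(u')^{-1}(\lambda u'(x))$ and $x\mapsto(u')^{-1}(\lambda^{-1}u'(x))$ are mutual inverses, so each is both concave and convex. This is the global form of the paper's swap. Differentiating the resulting affine family at $\lambda=1$ gives affine $\tau$ without computing the paper's $\Phi$.

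\textbf{Points to tighten.} First, in the case $a<0$, state that $u'''>0$ forces $a>-1$. That is what makes the exponent $-1/a$ exceed one and the power mean convex. Second, carry monotonicity of $c_{t+1}$ in the induction hypothesis, since your implied-consumption map must be increasing for the inversion step. Third, prove rather than assert that concave consumption makes that map concave; this is the paper's Lemma~\ref{lem:g}. The argument is short: the saving function $w-c(w)$ is increasing and convex, its inverse is concave, and the map is that inverse minus the identity. You do need to identify the map's domain with the range of the saving function.

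\textbf{The discarded two-state sketch.} It would not yield concave $\psi$ even locally. With $R_1=R_2$ the bound \eqref{eq:CS} is not attained, and a second-order expansion of \eqref{eq:Gconcave} gives only convexity of absolute prudence $1/\psi$, which is weaker. The sharp inequality needs slopes proportional to $\psi$ at the states, as in \eqref{eq:CSequality}.
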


The benchmark imposes no impatience condition, although the standing finiteness requirements on conditional expectations continue to apply. I next ask whether an economically relevant restriction on saving incentives permits uniform concavity beyond HARA.

\begin{defn}[Conditional impatience]
An optimal saving problem satisfies \emph{conditional impatience} if
\begin{equation}
    \E_t[\beta_{t+1}R_{t+1}]\le 1
    \quad\text{almost surely for all }t=0,\dots,T-1. \label{eq:conditionalImpatience}
\end{equation}
\end{defn}

This condition bounds only the conditional mean of the discounted gross return. It is weaker than the pointwise restriction $\beta_{t+1}R_{t+1}\le 1$ almost surely.

When the discount factor and gross return are constant, conditional impatience reduces to $\beta R\le 1$. Allowing income to be stochastic, \citet[Corollary~1]{ChamberlainWilson2000} show that $\beta R>1$ makes consumption grow without bound in an infinite-horizon optimal saving problem. Using this result, \citet[proof of Theorem~8]{StachurskiToda2019JET} show that market clearing requires $\beta R\le 1$ in a stationary equilibrium of a canonical Bewley--Huggett--Aiyagari economy with infinitely lived households, a common discount factor, a single risk-free asset, and a finite-mean wealth distribution.\footnote{For a related buffer-stock interpretation, \citet[Theorem~2]{Carroll2021} establishes a unique target ratio of cash-on-hand to permanent income in his CRRA model, conditional on a nondegenerate infinite-horizon solution and his other maintained assumptions. The ratio drifts toward the target in conditional expectation under a growth-impatience condition. With constant permanent income, this condition reduces to $\beta R<1$, while transitory income may remain stochastic.} The strict inequality $\beta R<1$ also appears in contraction results for the Coleman operator \citep[Assumption~2.1 and \S4.1]{LiStachurski2014} and in stationary-equilibrium results for a canonical Bewley economy with production \citep[Proposition~6 and Theorem~1]{Acikgoz2018}.

The following theorem is the main result of this paper.

\begin{thm}[Concavity characterization: conditional impatience]\label{thm:main}
Under Assumption~\ref{asmp:utility}, the consumption functions are concave in every optimal saving problem in
\begin{equation}
	\OS(\tilde{\beta},\tilde{R},\tilde{Y},\tilde{b};\ \E_t[\beta_{t+1}R_{t+1}]\le1\ \mathrm{a.s.},\ t=0,\dots,T-1) \label{eq:conditionallyImpatientClass}
\end{equation}
if and only if inverse absolute prudence, $-u''/u'''$, is concave.
\end{thm}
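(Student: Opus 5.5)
My plan follows the aggregation argument announced in the introduction: reduce both directions to a one-period statement about how the Euler equation aggregates marginal utility. Fix a history at time $t<T$ and write the Euler equation as
\begin{equation*}
u'(c_t(w))\ge \E_t\bigl[\beta_{t+1}R_{t+1}u'(c_{t+1}(R_{t+1}(w-c_t(w))+Y_{t+1}))\bigr],
\end{equation*}
with equality when the borrowing constraint is slack. Introduce $g\coloneqq (u')^{-1}$, so that $c_t(w)=\min\{w+b_t,\,g(\E_t[\dots])\}$. The key identity is that for a function $f$ with $u'\circ f$ smooth, concavity of $g(\sum_i p_i u'(f_i))$ is governed by $\psi$. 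Indeed, differentiating $h=g(\sum_i p_iu'(x_i))$ twice, with $x_i=f_i(s)$ concave and increasing in savings $s$, the sign of $h''$ reduces to a quadratic-form inequality involving $u''$ and $u'''$. That inequality is a Jensen-type statement for the function $\phi\coloneqq -u'''/(u'')^2\cdot u'\circ$ expressed in marginal-utility units. More precisely, let $v=u'(x)$ and define $\Psi(v)\coloneqq \psi(g(v))$. The relevant condition is that the map $v\mapsto$ (curvature of $g$) satisfies $\sum_i p_i(\cdot)\le(\cdot)$ whenever $\sum_i p_i\le 1$. This holds iff $\psi$ is concave, with the inequality $\sum p_i\le1$ absorbing the constant term of $\psi$ that the unrestricted case forces to vanish.

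\textbf{Step 1 (aggregation lemma).} Let $p_i\ge 0$ with $\sum_ip_i\le 1$, and let $x_i(s)$ be concave and nondecreasing. I claim $s\mapsto g\bigl(\sum_ip_iu'(x_i(s))\bigr)$ is concave if $\psi$ is concave. I will prove this first for $C^2$ functions on finitely many states by computing second derivatives. The terms from $x_i''\le0$ have the right sign since $g$ is decreasing and $u''<0$. The remaining terms are handled by the two-point inequality $\psi(g(\sum p_iu'(x_i)))\ge\sum_i \tilde p_i\psi(x_i)$, with weights $\tilde p_i=p_iu''(x_i)/u''(\cdot)$. This is exactly where concavity of $\psi$ and $\sum p_i\le1$ enter: $\sum\tilde p_i\le1$ up to a tangent-line comparison, and concave $\psi$ lies below its tangent. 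I will then extend to nonsmooth concave $x_i$ by mollification and to general conditional expectations by approximating with finite-support distributions and monotone convergence. The fact that concavity is closed under pointwise limits makes this extension routine.

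\textbf{Step 2 (sufficiency by backward induction).} At $T$, $c_T(w)=w$ is concave. Suppose $c_{t+1}(\cdot,\omega)$ is concave and increasing. Then $x(s)=c_{t+1}(R s+Y)$ is concave and increasing in $s$. Applying Step 1 with weights $\beta_{t+1}R_{t+1}\,d\mathrm P$, whose total mass $\E_t[\beta_{t+1}R_{t+1}]\le1$ by conditional impatience, shows that $s\mapsto g(\E_t[\beta R u'(x(s))])$ is concave and increasing. The unconstrained consumption solves $c=g(\cdot)$ at $s=w-c$. Inverting the relation $w=c+s$, where $c$ is a concave increasing function of $s$, yields $c$ concave in $w$. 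Taking the minimum with the affine constraint $w+b_t$ preserves concavity.

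\textbf{Step 3 (necessity).} Suppose $\psi$ is not concave. Then there exist three points $x_1<x_2<x_3$, or equivalently two points and a weight, where $\psi$ lies strictly above its chord. I will construct a one-period problem ($T=1$, $b_1=0$, $b_0=\infty$, deterministic $\beta$) with two-point income $Y\in\{y_1,y_2\}$, probabilities $q$, and $\beta R=1$. In this problem $c_0$ satisfies $u'(c_0)=\E[u'(R(w-c_0)+Y)]$, so conditional impatience holds with equality. A second-order expansion of $c_0$ around the relevant wealth level gives $c_0''(w)$ proportional to the failure of the Step 1 inequality at the chosen points. Choosing $y_i$ and $q$ so that the values $R(w-c_0)+y_i$ sit at the points where the chord inequality fails makes $c_0''>0$. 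The main obstacle is this step: the two-point Jensen failure of $\psi$ must be matched exactly by the Euler-equation curvature. To do this, I will reparametrize by $v=u'(x)$ and choose $R$ as a free scaling, so that the weights $\tilde p_i$ become arbitrary probability weights summing to one. I also need to check that a local strict violation of concavity of $\psi$ can always be realized this way; I will argue this by continuity, using small shock spreads.
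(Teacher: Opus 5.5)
\textbf{Main gap: Step~3 (necessity).} With deterministic $\beta$ and $R$ and only income risk, $g(s)=\phi^{-1}(\E[\phi(Rs+Y)])$ (with $\phi=u'$ and $\beta R=1$) tests the aggregator $\phi^{-1}(\sum_nq_n\phi(z_n))$ only along the direction $(1,\dots,1)$. A deterministic $R$ merely rescales $s$, so it changes neither the weights $\tilde p_i$ nor the sign of $g''$. It cannot make the weights arbitrary. Along that direction, the Cauchy--Schwarz step that turns your quadratic form into $\psi(h)\ge\sum_i\tilde p_i\psi(x_i)$ is strict unless $\psi(z_n)$ is the same in every state. So concavity of $g$ only yields $\Phi(\E m)\ge(\E\phi'(z))^2/\E\phi''(z)$, with $m=\phi(z)$ and $\Phi(u'(x))=u''(x)^2/u'''(x)$. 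This is weaker than $\Phi(\E m)\ge\E\Phi(m)$. Expanding in a small spread of the $z_n$, the condition $g''\le0$ reduces at leading order to $\psi\psi''\le2(\psi')^2$, i.e.\ convexity of absolute prudence $1/\psi$. That is implied by $\psi''\le0$ but does not imply it. For example, a sufficiently small smooth convex bump added to the CRRA $\psi(x)=x/(\gamma+1)$ is invisible to your small-spread continuity argument. (Also, non-concavity means $\psi$ lies strictly \emph{below} a chord, not above.) The missing idea is to use the joint stochasticity the class allows. Take state-dependent returns $R_n\propto\psi(z_n)$, which is the Cauchy--Schwarz equality case, and offset them with $\beta_n=\lambda/R_n$, so that $\beta_nR_n\equiv\lambda\le1$ and conditional impatience holds. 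Then concavity of $c_0$, transferred to $g$, gives $\Phi(\sum_np_nm_n)\ge\sum_np_n\Phi(m_n)$ for arbitrary $m$ and probability weights $p$. Hence $\Phi$ is concave, and so is $\psi$ by the change of variables below. This is the paper's construction \eqref{eq:construction}.

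\textbf{Step~1: right target, no valid proof.} Multiplying by $-u''(h)>0$, your inequality $\psi(h)\ge\sum_i\tilde p_i\psi(x_i)$ is exactly $\Phi(\sum_ip_im_i)\ge\sum_ip_i\Phi(m_i)$ with $m_i=u'(x_i)$. The justification you give fails in two ways. First, $\sum_i\tilde p_i\le1$ is false in general: for CRRA with $\sum_ip_i=1$, the power-mean inequality gives $\sum_i\tilde p_i\ge1$. Second, bounding $\psi$ by its tangent $\ell$ at $h$ is too weak. For $u'(x)=\log(1+1/x)$, with $x_1=1$ with probability $0.99$ and $x_2=0.1$ with probability $0.01$, one gets $\sum_i\tilde p_i\ell(x_i)\approx0.656>\ell(h)=\psi(h)\approx0.648$. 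Your $\Psi=\psi\circ(u')^{-1}$ is also the wrong object: it is convex for CRRA. The missing step is the change of variables to marginal-utility units. Differentiating $\Phi(u'(x))=-u''(x)\psi(x)$ gives $\Phi'(u'(x))=1-\psi'(x)$, so $\Phi$ is concave iff $\psi$ is. Concavity and positivity of $\Phi$ then give $\Phi(\lambda\bar m)\ge\lambda\Phi(\bar m)$ for $\lambda\le1$, and hence the inequality whenever $\sum_ip_i\le1$. What total mass at most one absorbs is the nonnegative intercept of the tangent lines of $\Phi$, not an intercept of $\psi$. Indeed, the unrestricted case forces $\Phi(m)=\kappa m$, i.e.\ $\psi$ affine with arbitrary intercept. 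With this repair, and with the passage to general conditional laws done carefully (e.g.\ via concavity of the aggregator as a functional, as in Lemma~\ref{lem:multivariateAggregator}), your Step~2 is the paper's backward induction and goes through.
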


The unrestricted benchmark requires HARA utility, whose inverse absolute prudence is affine. Under conditional impatience, concavity of inverse absolute prudence is necessary and sufficient. The following example shows that this enlargement of the utility class is strict.

\begin{exmp}[Non-HARA perturbation of CRRA utility]\label{exmp:nonHARA}
Let $C=(0,\infty)$ and define
\begin{equation}
	u'(x)\coloneqq \int_x^\infty \gamma t^{-\gamma-1}\e^{-\delta t}\diff t, \label{eq:gamdel}
\end{equation}
where $\gamma>0$ and $\delta\ge 0$. When $\delta=0$, \eqref{eq:gamdel} gives $u'(x)=x^{-\gamma}$, which corresponds to CRRA utility. Clearly, $u'(0)=\infty$, $u'(\infty)=0$, so the Inada conditions \eqref{eq:Inada} hold. Moreover,
\begin{align*}
	u''(x)&=-\gamma x^{-\gamma-1}\e^{-\delta x}<0,\\
	u'''(x)&=\gamma x^{-\gamma-2}\e^{-\delta x}(\gamma+1+\delta x)>0.
\end{align*}
Inverse absolute prudence is then
\begin{equation*}
	\psi(x)\coloneqq -\frac{u''(x)}{u'''(x)}=\frac{x}{\gamma+1+\delta x},
\end{equation*}
which is affine when $\delta=0$ and strictly concave whenever $\delta>0$. Thus this family nests CRRA at $\delta=0$ and, for every $\delta>0$, provides a non-HARA utility satisfying the condition in Theorem~\ref{thm:main}.
\end{exmp}

\subsection{An analytical example of nonconcavity}\label{subsec:nonconcavity}

The proof of Theorem~\ref{thm:unrestricted} shows that uniform consumption concavity requires HARA utility even in deterministic one-period problems with unrestricted borrowing when $\beta$ and $R$ are unrestricted. Thus any non-HARA utility satisfying Assumption~\ref{asmp:utility} must generate a nonconcave consumption function in some such problem. This remains true even if inverse absolute prudence is concave. I now construct an analytical example.

Let $C=(0,\infty)$ and define
\begin{equation}
	u(x)\coloneqq(x+1)\log(x+1)-x\log x. \label{eq:nonconcavityUtility}
\end{equation}
Its first three derivatives are
\begin{align*}
	u'(x)&=\log(x+1)-\log x=\log(1+1/x)>0,\\
	u''(x)&=\frac{1}{x+1}-\frac{1}{x}=-\frac{1}{x(x+1)}<0,\\
	u'''(x)&=-\frac{1}{(x+1)^2}+\frac{1}{x^2}=\frac{2x+1}{x^2(x+1)^2}>0.
\end{align*}
Moreover, $u'(x)\to\infty$ as $x\downarrow0$ and $u'(x)\to0$ as $x\to\infty$, so $u$ satisfies Assumption~\ref{asmp:utility}. Its inverse absolute prudence \eqref{eq:psi} is
\begin{equation*}
	\psi(x)\coloneqq -\frac{u''(x)}{u'''(x)}=\frac{x(x+1)}{2x+1},
\end{equation*}
whose second derivative is $\psi''(x)=-2(2x+1)^{-3}<0$. Thus $\psi$ is strictly concave and satisfies the condition in Theorem~\ref{thm:main}. However, $u$ is not HARA: $\psi$ is not affine, whereas every HARA utility has affine inverse absolute prudence by \eqref{eq:HARA_inverseabsoluteprudence}.

Now fix deterministic constants $\beta,R>0$ and $Y\ge0$ satisfying $\beta R=2$. Consider the one-period problem with unrestricted borrowing. Given initial wealth $w>-Y/R$, the agent chooses $c_0,c_1>0$ to maximize $u(c_0)+\beta u(c_1)$ subject to
\begin{equation*}
	c_1=R(w-c_0)+Y.
\end{equation*}
Strict concavity and the Inada conditions imply a unique interior solution. The Euler equation and the expression for marginal utility above imply
\begin{equation*}
	\log(1+1/c_0)=2\log(1+1/c_1) \iff 	c_0=\frac{c_1^2}{2c_1+1}.
\end{equation*}
To solve for the consumption function, set $A\coloneqq w+Y/R$ and $c=c_0$. Then $c_1=R(A-c)$, so eliminating $c_1$ from the Euler equation yields
\begin{equation*}
	c=\frac{[R(A-c)]^2}{2R(A-c)+1}\iff R(R+2)c^2-(2R(R+1)A+1)c+R^2A^2=0.
\end{equation*}
At $c=0$, the quadratic polynomial equals $R^2A^2>0$, whereas at $c=A$, it equals $-A<0$. Since the leading coefficient is positive, its roots lie in $(0,A)$ and $(A,\infty)$. Feasibility requires $c=c_0\in(0,A)$; hence the smaller root gives the consumption function
\begin{equation}
	c(w)=\frac{2R(R+1)A+1-\sqrt{\Delta(A)}}{2R(R+2)}, \label{eq:nonconcavityPolicy}
\end{equation}
where $\Delta(A)\coloneqq 4R^2A^2+4R(R+1)A+1$. Differentiating with respect to $A$ gives
\begin{align*}
	\frac{\diff}{\diff A}\Delta(A)^{1/2}&=\Delta(A)^{-1/2}2R(2RA+R+1),\\
	\frac{\diff^2}{\diff A^2}\Delta(A)^{1/2}&=\Delta(A)^{-1/2}(2R)^2-\Delta(A)^{-3/2}[2R(2RA+R+1)]^2\\
	&=-\Delta(A)^{-3/2}4R^3(R+2).
\end{align*}
Because $A=w+Y/R$, we have $\diff A/\diff w=1$, so the second derivative of the consumption function is
\begin{equation*}
	c''(w)=-\frac{1}{2R(R+2)}\frac{\diff^2}{\diff A^2}\Delta(A)^{1/2}=2R^2\Delta(A)^{-3/2}>0.
\end{equation*}
Thus the consumption function is strictly convex, and hence not concave, for every deterministic $(\beta,R,Y)$ satisfying $\beta R=2$.

For illustration, I set $\beta=1$, $R=2$, and $Y=1$. Then the feasible wealth domain is $w>-Y/R=-1/2$. Figure~\ref{fig:nonconcavity} plots the consumption function \eqref{eq:nonconcavityPolicy} and its second derivative for these parameter values.

\begin{figure}[!htb]
	\centering
	\def\ncReturn{2}
	\def\ncIncome{1}
	\pgfmathsetmacro{\ncLower}{-\ncIncome/\ncReturn}
	\pgfmathsetmacro{\ncStart}{\ncLower+0.0001}
	\definecolor{matlabblue}{rgb}{0,0.4470,0.7410}
	\begin{tikzpicture}[
		font=\footnotesize,
		declare function={
			ncA(\x)=\x+\ncIncome/\ncReturn;
			ncDelta(\a)=4*\ncReturn^2*\a^2+4*\ncReturn*(\ncReturn+1)*\a+1;
			ncPolicy(\x)=2*\ncReturn^2*ncA(\x)^2/(2*\ncReturn*(\ncReturn+1)*ncA(\x)+1+sqrt(ncDelta(ncA(\x))));
			ncSecond(\x)=2*\ncReturn^2/ncDelta(ncA(\x))^(3/2);
		}
	]
		\begin{groupplot}[
			group style={group size=2 by 1, group name=nonconcavity, horizontal sep=0.13\textwidth},
			width=0.38\textwidth, height=0.2755\textwidth, scale only axis,
			xmin=\ncLower, xmax=1, domain=\ncStart:1, samples=200,
			ymin=0, xtick={-0.5,0,0.5,1},
			xlabel={Wealth $w$},
			axis lines=box, tick align=inside, tick pos=both,
			scaled ticks=false,
			tick label style={/pgf/number format/fixed, /pgf/number format/precision=1}
		]
			\nextgroupplot[ymax=0.8, ytick={0,0.2,0.4,0.6,0.8},
				ylabel={Consumption function $c(w)$}]
			\addplot[matlabblue, line width=0.8pt, no marks] {ncPolicy(x)};
			\nextgroupplot[ymax=8, ytick={0,2,4,6,8},
				ylabel={Second derivative $c''(w)$}]
			\addplot[matlabblue, line width=0.8pt, no marks] {ncSecond(x)};
		\end{groupplot}
	\end{tikzpicture}
	\caption{Consumption and its second derivative in the deterministic one-period problem with $u(x)=(x+1)\log(x+1)-x\log x$, $\beta=1$, $R=2$, and $Y=1$. The consumption function is strictly convex on its feasible wealth domain $w>-1/2$.}
	\label{fig:nonconcavity}
\end{figure}

Theorem~\ref{thm:main} guarantees concave consumption functions for this utility in every problem satisfying conditional impatience. Here, $\beta R=2$ violates that condition and produces a strictly convex consumption function, even without uncertainty or borrowing constraints. Thus concave inverse absolute prudence no longer suffices once discounted returns are unrestricted, illustrating the distinction between Theorems~\ref{thm:unrestricted} and \ref{thm:main}. We revisit this example in Section~\ref{sec:perfectForesight}.

\subsection{Related literature}\label{subsec:literature}

Table~\ref{tab:literature} compares the results most directly related to the global concavity characterizations in this paper. I omit standard regularity conditions and work primarily concerned with local concavification, the comparative effects of risk or liquidity constraints, or precautionary saving; I discuss selected contributions in those strands below.

\begin{table}[!htb]
\centering
\caption{Selected results on global consumption concavity.}
\label{tab:literature}
\footnotesize
\renewcommand{\arraystretch}{1.15}
\begin{tabularx}{\textwidth}{@{}>{\raggedright\arraybackslash}p{0.16\textwidth}>{\raggedright\arraybackslash}p{0.22\textwidth}>{\raggedright\arraybackslash}p{0.15\textwidth}>{\raggedright\arraybackslash}p{0.21\textwidth}>{\raggedright\arraybackslash}X@{}}
\toprule
Result & Environment & Borrowing & Key condition & Conclusion \\
\midrule
\citet{CarrollKimball1996}
& Finite horizon; stochastic $\beta$, $R$, and $Y$
& Unrestricted
& HARA utility
& Concavity in wealth \\

\citet{SuenConcave}
& Finite horizon; constant $\beta$ and $R$; permanent and transitory income risk
& Deterministic, age-dependent limits
& $\beta R\le 1$ and concavity of $-u''/u'''$
& Joint concavity in assets and one income component \\

\citet{PelusoTrannoy2012}
& One period; deterministic $\beta$ and $R$
& Unrestricted
& $\beta R\ge1$ and convexity of $-u'/u''$
& Concavity in wealth \\

\citet*{MaStachurskiToda2020JET}
& Infinite-horizon Markov model; stochastic $\beta$, $R$, and $Y$
& No borrowing
& CRRA utility
& Concavity in wealth \\

\citet{Toda2021JME}
& One period; finite-support stochastic $\beta$, $R$, and $Y$
& Unrestricted
& Concavity in every such problem
& Necessity of CRRA utility on $(0,\infty)$ \\

\citet*{CarrollHolmKimball2021}
& Finite horizon; constant $\beta$ and $R$; future risks
& Period-by-period liquidity constraints
& $u'''>0$ and decreasing absolute prudence
& Concavity with future risks and constraints \\

\midrule

Theorem~\ref{thm:unrestricted}
& Stochastic $\beta$, $R$, $Y$, and $b$
& Stochastic limits
& No return restriction
& iff HARA \\

Theorem~\ref{thm:main}
& Stochastic $\beta$, $R$, $Y$, and $b$
& Stochastic limits
& $\E_t[\beta_{t+1}R_{t+1}]\le1$
& iff concave $\psi$ \\

Proposition~\ref{prop:deterministic}
& Deterministic $\beta$, $R$, and $Y$
& Unrestricted
& \begin{tabular}[t]{@{}l@{}}
    $\beta_tR_t\le1$ \\
    $\beta_tR_t\ge1$
  \end{tabular}
& \begin{tabular}[t]{@{}l@{}}
    iff concave $\tau$ \\
    iff convex $\tau$
  \end{tabular} \\

Proposition~\ref{prop:reverseImpatience}
& Stochastic $\beta$, $R$, and $Y$
& Unrestricted
& \begin{tabular}[t]{@{}l@{}}
    $\beta_tR_t\le1$ \\
    $\beta_tR_t\ge1$
  \end{tabular}
& \begin{tabular}[t]{@{}l@{}}
    iff concave $\psi$ \\
    iff HARA
  \end{tabular} \\
\bottomrule
\end{tabularx}
\captionsetup{font=footnotesize,justification=raggedright,singlelinecheck=false,position=bottom}
\caption*{\textit{Notes:} In the lower block, all problems have a finite horizon, and ``iff'' refers to consumption concavity in every problem in the indicated class. Here $\tau=-u'/u''$ is absolute risk tolerance and $\psi=-u''/u'''$ is inverse absolute prudence. Return restrictions hold at every transition and almost surely under uncertainty. The converses for Proposition~\ref{prop:reverseImpatience} follow from Theorems~\ref{thm:unrestricted} and \ref{thm:main}.}
\end{table}

The unrestricted benchmark in Theorem~\ref{thm:unrestricted} combines earlier sufficiency and necessity arguments. \citet[Theorem~1]{CarrollKimball1996} show that HARA utility guarantees consumption concavity in a finite-horizon model with stochastic discounting, returns, and income under unrestricted borrowing. \citet*[Proposition~2.5]{MaStachurskiToda2020JET} establish concavity under an abstract preservation condition in an infinite-horizon Markov model with stochastic discounting, returns, and income but no borrowing. Their Remark~2.1 verifies the condition for CRRA utility. As \citet[footnote~1 and \S3]{Toda2021JME} notes, the same preservation argument applies to finite horizons and liquidity constraints and, after translating the consumption domain, to HARA utility. Conversely, \citet[Theorem~4]{Toda2021JME} proves that CRRA is necessary on $(0,\infty)$ if the one-period consumption function is concave for every strictly positive, finite-support random vector $(\beta,R,Y)$; translating the domain yields HARA more generally.

The closest predecessor to Theorem~\ref{thm:main} is the unpublished manuscript by \citet{SuenConcave}. In a finite-horizon life-cycle model with a constant discount factor and risk-free return, deterministic age-dependent borrowing limits, and independent permanent and transitory income shocks with compact support, his Theorem~8 proves that $\beta R\le1$ and concavity of $-u''/u'''$ imply joint concavity in assets and one income component, holding the other fixed. Because the return is deterministic, assets and wealth are related affinely conditional on the income state, so this result implies concavity in wealth. Theorem~\ref{thm:main} establishes necessity for uniform consumption concavity in wealth and extends sufficiency to jointly stochastic discount factors, returns, income, and borrowing limits, without requiring Suen's permanent--transitory decomposition, independence, or compact-support assumptions. Suen's result remains stronger along the income dimensions.

Related analytical work derives consumption rules or constraint-induced curvature under more specific deterministic environments. \citet{Park2006} obtains a closed-form inverse consumption function under CRRA utility in continuous time. \citet{Holm2018} obtains an analytical characterization under HARA utility and proves strict concavity when a liquidity constraint is relevant, without restricting the sign of $u'''$. \citet{Roulleau-Pasdeloup2026} derives a global explicit consumption function under CRRA utility at a zero net interest rate and an explicit approximation for rates near zero. In a deterministic infinite-horizon model, \citet*{NishiyamaKato2012} show that a liquidity constraint can produce a concave consumption function with quadratic utility, for which $u'''=0$.

Other papers study how risk and liquidity alter consumption curvature. In a two-period model, \citet{Kimball1990} introduces absolute prudence as an index of precautionary saving and shows how its monotonicity determines the effect of income risk on the marginal propensity to consume. In a finite-horizon discrete-time model, \citet*{CarrollHolmKimball2021} show that, when $u'''>0$ and absolute prudence is decreasing, future risks and liquidity constraints make the consumption function more concave. These papers identify mechanisms through which particular risks and constraints create concavity, whereas Theorem~\ref{thm:main} characterizes the utility functions that guarantee concavity throughout the conditionally impatient stochastic class \eqref{eq:conditionallyImpatientClass}.

Taken together, Theorems~\ref{thm:unrestricted} and \ref{thm:main} separate the roles of saving incentives and preference curvature: conditional impatience enlarges the admissible preference class from HARA to utilities with concave inverse absolute prudence.

\section{Perfect foresight and the role of uncertainty}\label{sec:perfectForesight}

Theorems~\ref{thm:unrestricted} and \ref{thm:main} allow discount factors, returns, income, and borrowing limits to be jointly stochastic. To isolate the role of uncertainty, this section examines perfect foresight with unrestricted borrowing. Under the restrictions on discounted returns considered below, I characterize consumption curvature in terms of the curvature of absolute risk tolerance. I then revisit the example in Section~\ref{subsec:nonconcavity}, correct a claim in the literature, and show that the perfect-foresight characterizations do not extend to uncertainty even when the same restrictions on discounted returns hold almost surely.

\subsection{Consumption curvature under perfect foresight}\label{subsec:deterministic}

Consider optimal saving problems in $\OS(\beta,R,Y,\infty)$, where the absence of tildes indicates perfect foresight and $\infty$ indicates the agent can borrow without restriction. Thus, the agent knows the entire paths of $\beta_t$, $R_t$, and $Y_t$, although these variables may vary over time. Define \emph{absolute risk tolerance} by
\begin{equation}
	\tau(x)\coloneqq-\frac{u'(x)}{u''(x)},\qquad x\in C. \label{eq:riskTolerance}
\end{equation}
The following proposition characterizes consumption curvature in this setting.

\begin{prop}[Perfect-foresight characterization]\label{prop:deterministic}
Under Assumption~\ref{asmp:utility}, the following statements hold.
\begin{enumerate}
	\item\label{item:deterministicImpatient} The consumption functions are concave in every optimal saving problem in
	\begin{equation*}
		\OS(\beta,R,Y,\infty;\ \beta_tR_t\le1,\ t=1,\dots,T)
	\end{equation*}
	if and only if $\tau$ is concave on $C$.
	\item\label{item:deterministicPatient} The consumption functions are concave in every optimal saving problem in
	\begin{equation*}
		\OS(\beta,R,Y,\infty;\ \beta_tR_t\ge1,\ t=1,\dots,T)
	\end{equation*}
	if and only if $\tau$ is convex on $C$.
\end{enumerate}
Replacing consumption concavity by consumption convexity reverses the curvature condition on $\tau$ in both statements.
\end{prop}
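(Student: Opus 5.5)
The plan is to reduce everything to the curvature of the one-step Euler map. This works because, under perfect foresight and unrestricted borrowing, the Euler equation always holds with equality. The intertemporal budget constraint is then linear in lifetime resources.

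\emph{Step 1: the Euler map and its curvature.} For $k>0$, define $f_k:C\to C$ by $u'(f_k(x))=u'(x)/k$. This is well defined by $u''<0$ and the Inada conditions \eqref{eq:Inada}. The Euler equation reads $c_{t+1}=f_{\beta_{t+1}R_{t+1}}(c_t)$. Write $y=f_k(x)$ and differentiate the defining identity. This gives
\begin{equation*}
	y'=\frac{u''(x)\,u'(y)}{u'(x)\,u''(y)}=\frac{\tau(y)}{\tau(x)}>0 .
\end{equation*}
Here $\tau$ is $C^1$, since $\tau'=-1+u'u'''/(u'')^2$. Hence $y'$ is $C^1$, and
\begin{equation*}
	y''=\frac{\tau(y)\bigl[\tau'(y)-\tau'(x)\bigr]}{\tau(x)^2}.
\end{equation*}
If $k\le1$, then $u'(y)\ge u'(x)$, so $y\le x$. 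If $\tau$ is concave, $\tau'$ is nonincreasing, so $\tau'(y)\ge\tau'(x)$ and $f_k$ is convex. If $k\ge1$, then $y\ge x$, and convexity of $\tau$ again gives $\tau'(y)\ge\tau'(x)$, so $f_k$ is convex. With the opposite curvature of $\tau$ in either case, $f_k$ is concave.

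\emph{Step 2: sufficiency by aggregation.} Fix $t$ and let $A_t=w_t+\sum_{s>t}Y_s/\prod_{j=t+1}^sR_j$ be lifetime resources, which are affine in $w_t$ with slope one. Iterating the Euler equation writes each future consumption as $c_s=F_s(c_t)$. Here $F_s$ is a composition of maps $f_{\beta_jR_j}$, each increasing, so $F_s$ is increasing. The intertemporal budget constraint becomes
\begin{equation*}
	H_t(c_t)\coloneqq c_t+\sum_{s=t+1}^T\frac{F_s(c_t)}{\prod_{j=t+1}^sR_j}=A_t .
\end{equation*}
A composition of increasing convex functions is increasing and convex, so $H_t$ is strictly increasing and convex under the hypotheses of either statement. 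Therefore $c_t(w)=H_t^{-1}(w+\text{const})$ is concave, as the inverse of an increasing convex function. Strict concavity of $u$ makes the solution unique, so this $c_t$ is the consumption function. The reversed claim (convexity) follows identically, since then every $f_k$ is concave and $H_t^{-1}$ is convex.

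\emph{Step 3: necessity via one-period problems.} Take $T=1$, with $\beta_1R_1=k$ and arbitrary $R_1=R$. Then $H_0(x)=x+f_k(x)/R$, and $c_0$ is concave exactly when $H_0$ is convex, i.e.\ when $f_k$ is convex. For statement (i), pick any $y<x$ in $C$ and set $k=u'(x)/u'(y)<1$, so that $f_k(x)=y$. Convexity of $f_k$ near $x$ forces $\tau'(y)\ge\tau'(x)$, using $\tau>0$ and the formula for $y''$. Since $y<x$ were arbitrary, $\tau'$ is nonincreasing, so $\tau$ is concave. Statement (ii) uses $y>x$ and $k>1$ symmetrically, and the convexity version is the same with inequalities reversed.

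I expect the main obstacle to be this necessity step. The issue is that convexity of $f_k$ is global, whereas the pointwise sign of $y''$ at $x$ is what yields the inequality. I would argue via continuity of $y''$ at the chosen $x$: if $\tau'(y)<\tau'(x)$ for some pair, then $f_k''<0$ on a neighborhood of $x$, contradicting convexity. This contradiction is also what pins down exactly every pair $(x,y)$ with the required ordering.
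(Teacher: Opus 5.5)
Your proposal is correct and follows essentially the paper's route: the curvature formula for the Euler map in terms of $\tau'$, inversion of the strictly increasing present-value function $H_t$ for sufficiency, and, for necessity, one-period problems whose discounted return is chosen so that an arbitrary ordered pair $(x,y)$ is optimal. The paper differs only in details: it fixes $R=1$, $Y=0$ and differentiates $c_0$ directly, and it writes the iterated Euler map as a single map with cumulative discounted return, since $f_{k_2}\circ f_{k_1}=f_{k_1k_2}$, instead of composing convex maps. The obstacle you anticipate in Step~3 is not real: $f_k$ is twice continuously differentiable, so convexity on the open interval $C$ already gives $f_k''\ge0$ at every point, and no contradiction argument is needed.
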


In a two-good model with additively separable utility and the same utility index for both goods, \citet[Proposition~3(iii)]{PelusoTrannoy2012} show that demand for the good with lower consumption is concave in wealth in every such problem if and only if risk tolerance is convex. Their Section~3 applies this result to intertemporal consumption when $\beta R>1$, corresponding to Proposition~\ref{prop:deterministic}\ref{item:deterministicPatient} with $T=1$ and strict inequality. Their Proposition~6 extends the sufficiency argument to a finite horizon. Proposition~\ref{prop:deterministic} provides finite-horizon characterizations of both consumption concavity and convexity under either $\beta_tR_t\le1$ at every transition or $\beta_tR_t\ge1$ at every transition.

The proposition and its proof yield the following characterization of affine consumption functions without restrictions on discounted returns.

\begin{cor}[Affine consumption]\label{cor:deterministicAffine}
Under Assumption~\ref{asmp:utility}, the consumption functions are affine in every optimal saving problem in $\OS(\beta,R,Y,\infty)$ if and only if $u$ is HARA.
\end{cor}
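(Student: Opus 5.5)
Both directions go through the perfect-foresight Euler equation. With unrestricted borrowing, the deterministic problem collapses to a static allocation of lifetime resources. Write $\theta_t\coloneqq\prod_{j\le t}\beta_jR_j^{-1}\cdots$; more concretely, the Euler equations $u'(c_t)=\beta_{t+1}R_{t+1}u'(c_{t+1})$ give $c_t=g_t(c_0)\coloneqq (u')^{-1}(\kappa_t u'(c_0))$ with $\kappa_t=\prod_{j=1}^t(\beta_jR_j)^{-1}$. The lifetime budget $\sum_t c_t/\prod_{j\le t}R_j = w+\text{PV}(Y)\eqref{eq:os_budget}$ then determines $c_0$ implicitly. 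Every consumption function (at any date, by the same argument applied to the continuation problem) is an inverse of $h(c_0)=\sum_t q_t g_t(c_0)$ shifted by a constant, where $q_t>0$. The inverse of an increasing function is affine iff the function is affine. So consumption functions are affine in every problem iff $h$ is affine in every problem.

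\emph{Sufficiency (HARA $\Rightarrow$ affine).} For HARA, $u'(x)=(ax+b)^{-1/a}$ (or $e^{-x/b}$ when $a=0$) up to scale. Then $(u')^{-1}(\kappa u'(x))=\kappa^{-a}x+b(\kappa^{-a}-1)/a$, which is affine in $x$; the $a=0$ case gives $x-b\log\kappa$. Hence each $g_t$ is affine, so $h$ is affine and so is its inverse. This holds for every $\kappa>0$, so no restriction on $\beta_tR_t$ is needed.

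\emph{Necessity.} I would use the one-period case $T=1$, where $h(c_0)=c_0+R^{-1}g(c_0)$ with $g(x)=(u')^{-1}(\kappa u'(x))$ and $\kappa=1/(\beta R)$ arbitrary in $(0,\infty)$. Affinity of $h$ forces $g_\kappa$ to be affine on the relevant domain for every $\kappa>0$, i.e.\ $g_\kappa''=0$. Differentiating the identity $u'(g_\kappa(x))=\kappa u'(x)$ gives $g_\kappa'=\kappa u''(x)/u''(g_\kappa(x))$. Differentiating in $\kappa$ at $\kappa=1$ (where $g_1=\mathrm{id}$) gives $\partial_\kappa g_\kappa|_{\kappa=1}=u'/u''=-\tau$. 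Since $g_\kappa''\equiv0$ for all $\kappa$, differentiating in $\kappa$ and exchanging derivatives (smoothness from $u\in C^3$ and the implicit function theorem) yields $\tau''\equiv0$. So $\tau$ is affine, which is exactly \eqref{eq:ARA}, i.e.\ $u$ is HARA. Alternatively, one can take $\beta R\le1$ and $\beta R\ge1$ instances and invoke Proposition~\ref{prop:deterministic}\ref{item:deterministicImpatient} and \ref{item:deterministicPatient}: affinity means both concave and convex, so $\tau$ is both concave and convex, hence affine. This is the route the statement ``the proposition and its proof yield'' suggests, and I would present it as the main argument.

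The main obstacle is domain bookkeeping. Consumption functions are defined on $D_t(\omega)$, and the range of $c_0$ as wealth varies must cover all of $C$, so that local affinity of $\tau$ becomes global. With unrestricted borrowing and the Inada conditions \eqref{eq:Inada}, $c_0$ ranges over all of $C$ as $w$ ranges over $D_0$. This is what lets the curvature conclusion of Proposition~\ref{prop:deterministic} hold on all of $C$, and it is what I would verify carefully.
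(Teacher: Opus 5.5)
Your main argument matches the paper's: necessity follows from Proposition~\ref{prop:deterministic}, because affine consumption functions are simultaneously concave and convex, which forces $\tau$ to be affine. Your sufficiency check that $(u')^{-1}(\kappa u'(x))$ is affine for HARA is the same fact the paper reads off from \eqref{eq:deterministicFutureCurvature} with $\tau'$ constant.

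One small caveat concerns your side route: under $u\in C^3$ the function $\tau$ is only $C^1$, so you cannot obtain ``$\tau''\equiv0$'' by exchanging derivatives as stated. This is easily repaired, since $\partial_\kappa g_\kappa|_{\kappa=1}=-\tau$ is the pointwise limit of the difference quotients $(g_\kappa(x)-x)/(\kappa-1)$, which are affine in $x$, and so the limit is affine.
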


In Proposition~\ref{prop:deterministic}, uniform consumption concavity requires concave risk tolerance when $\beta_tR_t\le1$ at every transition, but convex risk tolerance when $\beta_tR_t\ge1$ at every transition. The Euler equation
\begin{equation*}
	u'(c_t)=\beta_{t+1}R_{t+1}u'(c_{t+1})
\end{equation*}
helps explain why the two cases require opposite curvature conditions. Because marginal utility decreases with consumption, $\beta_{t+1}R_{t+1}\le1$ implies $c_{t+1}\le c_t$, whereas $\beta_{t+1}R_{t+1}\ge1$ implies $c_{t+1}\ge c_t$. This ordering determines whether concave or convex risk tolerance yields concave consumption functions. With unrestricted borrowing, deterministic income shifts the present value of resources without changing the conclusion about curvature.

In each case of Proposition~\ref{prop:deterministic}, strict curvature of $\tau$ gives the corresponding strict curvature of $c_t$ for every $t<T$ if the return inequality is also strict at every transition: $\beta_tR_t<1$ in part~\ref{item:deterministicImpatient} and $\beta_tR_t>1$ in part~\ref{item:deterministicPatient}. Corollary~\ref{cor:deterministicAffine} identifies HARA as exactly the class that gives affine consumption functions in every perfect-foresight problem with unrestricted borrowing. The equality $\beta_tR_t=1$ at every transition also gives affine consumption functions for any utility satisfying Assumption~\ref{asmp:utility}. Thus weak curvature assumptions alone do not guarantee strict concavity or convexity. I prove the proposition, the corollary, and these additional claims in Section~\ref{subsec:proofComparisons}.

The example in Section~\ref{subsec:nonconcavity} illustrates the convexity conclusion of Proposition~\ref{prop:deterministic}\ref{item:deterministicPatient}. For the utility function \eqref{eq:nonconcavityUtility}, we have
\begin{align*}
	\tau(x)&=x(x+1)\log(1+1/x),\\
	\tau'(x)&=(2x+1)\log(1+1/x)-1,\\
	\tau''(x)&=2\log(1+1/x)-\frac{2x+1}{x(x+1)},\\
	\tau'''(x)&=\frac{1}{x^2(x+1)^2}>0.
\end{align*}
Since $\tau'''>0$ and $\lim_{x\to\infty}\tau''(x)=0$, we have $\tau''(x)<0$ for every $x>0$. Thus risk tolerance is strictly concave. Because $\beta R=2>1$ in the example, Proposition~\ref{prop:deterministic}\ref{item:deterministicPatient}, together with the strict-curvature observation above, implies that $c_0$ is strictly convex. This conclusion does not require the closed-form solution~\eqref{eq:nonconcavityPolicy}.

\citet*[Proposition~1]{GongZhongZou2012} claim that concave risk tolerance yields concave consumption functions in deterministic problems with a finite horizon, unrestricted borrowing, and $\beta_tR_t>1$ at every transition. The example contradicts this claim.\footnote{They reverse an inequality in the proof of their Lemma~3. Their Lemma~1 gives $c_{T-1}<c_T$ when $\beta_TR_T>1$, so concavity of $\tau$ implies $\tau'(c_T)\le\tau'(c_{T-1})$. In the first display on page~101, they take a weighted average of $\tau'(c_{T-1})+1$ and $\tau'(c_T)+1$ with positive weights that sum to one. This average cannot exceed $\tau'(c_{T-1})+1$, but they claim that it is strictly larger. Their Remark~1 also claims strict consumption concavity under a weak curvature assumption, but HARA utilities give affine consumption functions.} Proposition~\ref{prop:deterministic} gives two corrections: require convex risk tolerance when $\beta_tR_t>1$, or retain concave risk tolerance and impose $\beta_tR_t<1$. Keeping both original assumptions instead yields convex consumption functions.

\subsection{Comparison with uncertainty}\label{subsec:uncertaintyComparison}

Proposition~\ref{prop:deterministic} depends on perfect foresight. Under uncertainty, the Euler equation averages marginal utility across future states. Even a bound on $\beta_{t+1}R_{t+1}$ that holds in every state does not determine whether future consumption lies above or below current consumption in each state. The deterministic argument based on the ordering of current and future consumption therefore does not apply.

The following proposition gives the corresponding necessary conditions under uncertainty. I state these results for problems with unrestricted borrowing. Theorems~\ref{thm:unrestricted} and \ref{thm:main} establish sufficiency even with stochastic borrowing limits.

\begin{prop}[Pathwise restrictions]\label{prop:reverseImpatience}
Under Assumption~\ref{asmp:utility}, the following statements hold.
\begin{enumerate}
	\item\label{item:pathwisePatient} If the consumption functions are concave in every optimal saving problem in
	\begin{equation*}
		\OS(\tilde{\beta},\tilde{R},\tilde{Y},\infty;\ \beta_{t+1}R_{t+1}\ge1\ \mathrm{a.s.},\ t=0,\dots,T-1),
	\end{equation*}
	then $u$ is HARA.
	\item\label{item:pathwiseImpatient} If the consumption functions are concave in every optimal saving problem in
	\begin{equation*}
		\OS(\tilde{\beta},\tilde{R},\tilde{Y},\infty;\ \beta_{t+1}R_{t+1}\le1\ \mathrm{a.s.},\ t=0,\dots,T-1),
	\end{equation*}
	then inverse absolute prudence $\psi=-u''/u'''$ is concave on $C$.
\end{enumerate}
\end{prop}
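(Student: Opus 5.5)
The plan is to prove both parts with one-period problems ($T=1$) with unrestricted borrowing, a deterministic discount factor and return, and finite-support income. In such problems the Euler equation reduces to the aggregation problems studied in the proofs of Theorems~\ref{thm:unrestricted} and \ref{thm:main}. Take $b_0=\infty$, constants $\beta,R>0$, and income $Y$ taking values $y_1,\dots,y_S\ge0$ with probabilities $p_s$. Then $c=c_0(w)$ solves
\begin{equation*}
	u'(c)=\sum_{s=1}^S q_s\,u'\bigl(R(w-c)+y_s\bigr),\qquad q_s\coloneqq\beta R\,p_s,
\end{equation*}
and the weights sum to $\kappa\coloneqq\beta R$. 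The pathwise restriction $\beta_1R_1\ge1$ (resp.\ $\le1$) is then exactly $\kappa\ge1$ (resp.\ $\le1$). Because $\beta R$ is deterministic, the restriction holds in every state rather than only in mean. It imposes nothing on the relative weights $p_s$ or on the income spread.

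\textbf{Part~\ref{item:pathwiseImpatient}.} I would rerun the necessity half of Theorem~\ref{thm:main} and check that its finite-support construction can be taken with $\beta_{t+1}R_{t+1}$ constant in every state. If that construction needs total weight strictly below one, the choice $\beta=\kappa/R$ with $\kappa\le1$ still satisfies the pathwise bound. If it needs total weight exactly one, set $\beta=1/R$. In either case the one-period problem belongs to the pathwise-impatient class, so concavity of $c_0$ forces $\psi''\le0$.

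The step to verify is that the necessity argument for Theorem~\ref{thm:main} does not rely on state-dependent returns to generate its weights. If it does, I would replace state-dependent returns by state-dependent income levels $y_s$, which leaves $\beta R$ unchanged, and redo the local computation described next.

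\textbf{Part~\ref{item:pathwisePatient}.} Since $\kappa=1$ is admissible, the argument of part~\ref{item:pathwiseImpatient} with $\beta=1/R$ already gives that $\psi$ is concave. It remains to show that $\psi$ is convex, so that $\psi$ is affine and, after integrating $-u''/u'''=\psi$ twice as in \eqref{eq:HARA}, $u$ is HARA. I would use the freedom to take $\kappa>1$ arbitrarily large together with a small two-point income risk, $y_s=\bar y\pm\varepsilon$.

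Differentiating the Euler equation twice in $w$ and expanding in $\varepsilon$ expresses $c_0''(w)$ as a term of order $\varepsilon^2$. Its coefficient is a combination of $\psi''$ evaluated at future consumption and of $\tau$-type terms involving current and future consumption $c_0<c_1$. As a check, when $\varepsilon=0$ this is the deterministic problem of Proposition~\ref{prop:deterministic}\ref{item:deterministicPatient}, which already forces $\tau$ to be convex. The idea is to choose $\kappa$, $\bar y$, and the probabilities so that the deterministic contribution vanishes or is dominated, leaving a sign condition that forces $\psi''\ge0$ at an arbitrary point of $C$.

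A cleaner alternative is an asymmetric two-point risk in which one state has vanishing probability but a large weight $q_s$. This mimics the unrestricted weight vectors used in the necessity part of Theorem~\ref{thm:unrestricted}: any weight vector with sum $\kappa\ge1$ can be realized by choosing $\beta R=\kappa$ and $p_s=q_s/\kappa$. I would check whether the HARA-forcing configurations in that proof use total weight at least one. If they do, they embed directly into the pathwise-patient class and part~\ref{item:pathwisePatient} follows immediately.

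I expect this last check to be the main obstacle. Configurations with total weight above one should be what rule out strictly concave $\psi$; the example of Section~\ref{subsec:nonconcavity}, with $\beta R=2$, supports this. If some step of the Theorem~\ref{thm:unrestricted} argument instead requires total weight below one, I would substitute the $\kappa=1$ instance of part~\ref{item:pathwiseImpatient} for that step.
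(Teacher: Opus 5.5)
Your part~\ref{item:pathwisePatient} has a genuine gap. The target ``$\psi$ convex, hence affine, hence HARA'' does not work. Under Assumption~\ref{asmp:utility}, affine $\psi$ does not imply HARA: integrating $-u''/u'''=\psi$ leaves an additive constant in $u'$, and when $\sup C<\infty$ the Inada conditions need not force it to vanish. For instance, $u(x)=\log x-x$ on $C=(0,1)$ satisfies Assumption~\ref{asmp:utility} with $u'(x)=1/x-1$ and $\psi(x)=x/2$, yet $-u''/u'=1/(x(1-x))$. In the notation of Lemma~\ref{lem:Phi}, $\Phi(m)=(m+1)/2$ is affine, whereas HARA requires $\Phi(m)=\kappa m$.

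Your expansion would not deliver convexity of $\psi$ anyway. For fixed $\beta R>1$, the $\varepsilon=0$ part of $c_0''$ is proportional to $\tau'(c_0)-\tau'(c_1)$. It is of order one and vanishes only if $\tau$ is already affine on $[c_0,c_1]$, so the $O(\varepsilon^2)$ term cannot be isolated.

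Your closing suggestion to ``substitute the $\kappa=1$ instance'' points the right way, but the mechanism is missing, and you already hold both ingredients. Concavity of $\psi$ is concavity of $\Phi$ (Lemma~\ref{lem:PhiRatio}). Since $\Phi>0$, apply concavity at $m+(1-\beta^{-1})\epsilon=\beta^{-1}(\beta m)+(1-\beta^{-1})\epsilon$ and let $\epsilon\downarrow0$; this gives $\Phi(\beta m)\le\beta\Phi(m)$ for $\beta\ge1$. The deterministic problems with $R=1$ and $\beta\ge1$ from the proof of Theorem~\ref{thm:unrestricted} lie in the patient class and give the reverse inequality. Hence $\Phi(m)=\Phi(1)m$, and $u$ is HARA.

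In your own terms, $\Phi(\phi(x))/\phi(x)=\psi(x)/\tau(x)=1/(1+\tau'(x))$, so concave $\psi$ forces concave $\tau$. Combined with the convexity of $\tau$ you noted from Proposition~\ref{prop:deterministic}\ref{item:deterministicPatient}, $\tau$ is affine, which is exactly HARA. The right target is affine $\tau$, not affine $\psi$.

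For part~\ref{item:pathwiseImpatient}, your primary route is the paper's, and the check you flag succeeds. In \eqref{eq:construction}, $\beta_n=\lambda/v_n$ and $R_n=v_n$, so $\beta_nR_n=\lambda$ in every state, and $\lambda=1$ places the problem in both pathwise classes. This is also the source of the concave $\psi$ used above.

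The construction does rely essentially on state-dependent returns, however. Equality in the Cauchy--Schwarz step requires $v_n=R_n$ proportional to $\psi(x_n)$, as in \eqref{eq:CSequality}. Your fallback of a deterministic $R$ with income risk would therefore not work. A common direction $v_n=R$ makes the Cauchy--Schwarz step strict unless the $\psi(z_n)$ coincide. Expanding \eqref{eq:Gconcave} to second order in a small income spread at $\beta R=1$ then yields only $\psi\psi''\le2(\psi')^2$, i.e.\ convexity of absolute prudence $1/\psi$, which is strictly weaker than concavity of $\psi$.

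So drop the framing with deterministic $\beta$ and $R$. The pathwise restriction constrains only the product, so set $\beta_n=1/R_n$ state by state.
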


Theorems~\ref{thm:unrestricted} and \ref{thm:main} give the converses in parts~\ref{item:pathwisePatient} and \ref{item:pathwiseImpatient}, respectively. Part~\ref{item:pathwisePatient} also implies that uniform consumption concavity over all problems satisfying
\begin{equation*}
	\E_t[\beta_{t+1}R_{t+1}]\ge1
	\quad\text{almost surely for all }t=0,\dots,T-1
\end{equation*}
requires HARA, because that class contains the problems in part~\ref{item:pathwisePatient}.

Table~\ref{tab:perfectForesightComparison} compares the necessary and sufficient conditions for uniform consumption concavity under the same restrictions on discounted returns. Each entry requires concavity in every problem in the indicated class, not merely for one path or distribution of shocks. When $\beta_tR_t\le1$, allowing uncertainty changes the condition from concave risk tolerance to concave inverse absolute prudence. When $\beta_tR_t\ge1$, uncertainty instead requires HARA rather than convex risk tolerance. These distinctions remain even when the return restrictions hold almost surely.

\begin{table}[!htb]
	\centering
	\caption{Necessary and sufficient conditions for uniform consumption concavity with unrestricted borrowing. Return restrictions hold at every transition and almost surely under uncertainty.}
	\label{tab:perfectForesightComparison}
	\begin{tabular}{@{}lll@{}}
		\toprule
		Return restriction & Perfect foresight & Uncertainty \\
		\midrule
		$\beta_tR_t\le1$ & Concave $\tau=-u'/u''$ & Concave $\psi=-u''/u'''$ \\
		$\beta_tR_t\ge1$ & Convex $\tau=-u'/u''$ & HARA \\
		\bottomrule
	\end{tabular}
\end{table}

\section{Conclusion}\label{sec:conclusion}

This paper characterizes the preferences that make consumption concave in wealth across finite-horizon optimal saving problems. Under conditional impatience, concavity of inverse absolute prudence is necessary and sufficient, even with jointly stochastic discount factors, returns, income, and borrowing limits. Without a restriction on discounted returns, uniform concavity requires HARA utility. Conditional impatience therefore permits declining marginal propensities to consume for a strictly larger class of preferences.

The perfect-foresight results isolate the role of uncertainty. With unrestricted borrowing, absolute risk tolerance characterizes consumption curvature under restrictions on discounted returns. Under uncertainty, marginal-utility aggregation instead makes inverse absolute prudence relevant under conditional impatience and requires HARA for uniform concavity when the discounted gross return is at least one almost surely. These characterizations provide primitive conditions for declining marginal propensities to consume that can be checked without solving the household problem.

\appendix

\section{Proofs}\label{sec:proof}

To motivate the proof technique of Theorem~\ref{thm:main}, consider a one-period problem with unrestricted borrowing and let $s$ denote saving. Current and future consumption are $w-s$ and $Rs+Y$, respectively. The first-order condition gives
\begin{equation*}
    u'(w-s)=\E[\beta R u'(Rs+Y)]\iff c=w-s=(u')^{-1}(\E[\beta Ru'(Rs+Y)]),
\end{equation*}
where $u''<0$ makes $u'$ invertible.

Accordingly, given any random vector $(\beta,R,Y)\gg 0$ and letting $\phi=u'$ be the marginal utility function, define the function
\begin{equation}
    g(s)\coloneqq \phi^{-1}(\E[\beta R \phi(Rs+Y)]). \label{eq:g}
\end{equation}
The function $g$ in \eqref{eq:g} returns the consumption level $c\in C$ that implies the saving $s$ in the one-period problem. Because the domain of $u$ (and hence $\phi=u'$) is $C\subset \R$, defining $g(s)$ requires $Rs+Y\in C$ almost surely and $\E[\beta R\phi(Rs+Y)]<\infty$. The Inada conditions \eqref{eq:Inada} then imply that $g(s)$ is well-defined and belongs to $C$. For finite-support shocks, $g$ is as smooth as $\phi$ on its domain. The argument below does not require differentiating this expectation.

The following lemma provides the first step for proving necessity. It allows us to pass from the concavity of a consumption function to the concavity of the corresponding function $g$ in \eqref{eq:g}.

\begin{lem}[Concavity transfer]\label{lem:g}
Under Assumption~\ref{asmp:utility}, consider a one-period optimal saving problem with unrestricted borrowing in which $(\beta,R,Y)$ has finite support. If the consumption function is concave, then the corresponding function $g$ in \eqref{eq:g} is concave on its domain.
\end{lem}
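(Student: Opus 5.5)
The plan is to use the relation between the consumption function $c$ and the saving function $s(w)=w-c(w)$, together with the identity $c(w)=g(s(w))$ that comes from the first-order condition. With unrestricted borrowing and finite support, the optimum is interior for every $w$ in the wealth domain $D$, by strict concavity and the Inada conditions. The Euler equation then gives $u'(c(w))=\E[\beta R u'(Rs(w)+Y)]$, so $c(w)=g(s(w))$ for every $w\in D$. Equivalently, $w=s+g(s)$ on the range of $s$. Since $g$ is strictly increasing, $s\mapsto s+g(s)$ is strictly increasing and continuous. So $w(s)\coloneqq s+g(s)$ is the inverse of $s(w)$, and $s(w)$ is strictly increasing and continuous.

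Next I check that $s$ maps $D$ onto the domain $S$ of $g$. Here $S$ is the open interval of $s$ with $Rs+Y\in C$ on the finite support. Given $s\in S$, set $w=s+g(s)$. The Euler equation holds at $(w,s)$, and it is sufficient for optimality by concavity, so $w\in D$ and $s(w)=s$. Conversely, every optimal saving lies in $S$. Hence $s:D\to S$ is a bijection, and $g=c\circ s^{-1}$ on $S$, where $s^{-1}(\sigma)=\sigma+g(\sigma)$.

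The core step is to show that $g$ is concave, given that $c$ is concave and $s=\mathrm{id}-c$. Since $g$ is $C^2$ on $S$ for finite-support shocks, I will argue through derivatives where convenient, with a derivative-free fallback. Write $w=\sigma+g(\sigma)$. Then $c(w)=g(\sigma)$ and $\sigma=w-c(w)$. Differentiability of $g$ with $g'>0$ lets me apply the inverse function theorem, so $c$ is differentiable with
\[
c'(w)=\frac{g'(\sigma)}{1+g'(\sigma)}\in(0,1),
\]
and $w\mapsto\sigma$ is increasing. Concavity of $c$ makes $c'$ nonincreasing in $w$. Because $\sigma$ is increasing in $w$, $c'$ is then nonincreasing in $\sigma$. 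The map $x\mapsto x/(1+x)$ is increasing on $(0,\infty)$, so $g'$ is nonincreasing in $\sigma$. Therefore $g$ is concave.

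The derivative-free fallback is a composition argument. We have $g=c\circ h$ with $h(\sigma)=\sigma+g(\sigma)$, but the right-hand side depends on $g$, so this is circular. Instead I would use the graph: the graph of $g$ is the image of the graph of $c$ under the linear map $(w,c)\mapsto(w-c,c)$. A linear image of a convex hypograph need not stay the hypograph of a function in general, but it does here because $0<c'<1$ keeps the ordering. The derivative version avoids this, so I would present that.

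The main obstacle is establishing the bijection $D\leftrightarrow S$ and the differentiability of $c$. Both rest on interiority from the Inada conditions and on finite support, which makes $g$ smooth with $g'=\E[\beta R^2\phi'(R\sigma+Y)]/\phi'(g)>0$. With these in hand, the monotonicity transfer is routine.
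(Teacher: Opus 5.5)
Your proof is correct, but the core concavity step takes a different route from the paper's. The paper shares your first two steps: strict monotonicity of $g$, and identifying the domain $S$ of $g$ with the range of optimal saving via $w=s+g(s)$ and sufficiency of the first-order condition. After that it uses no derivatives. For $s_1,s_2\in S$ it sets $w_i=s_i+g(s_i)$, $\bar w=(1-\alpha)w_1+\alpha w_2$, and $\bar c=(1-\alpha)g(s_1)+\alpha g(s_2)$. Concavity of $c$ gives $c(\bar w)\ge\bar c$, hence $s(\bar w)=\bar w-c(\bar w)\le s_\alpha$. Monotonicity of $g$ then yields $g(s_\alpha)\ge g(s(\bar w))=c(\bar w)\ge\bar c$.

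You instead use the twice continuous differentiability of $g$, which finite support and three-times differentiable $u$ provide. You invert $W(\sigma)=\sigma+g(\sigma)$, using $W'=1+g'>1$, to obtain $c'(w)=g'(\sigma)/(1+g'(\sigma))$. You then transfer the monotonicity of $c'$ to $g'$ through the increasing map $x\mapsto x/(1+x)$. Each step checks out, and your route has the bonus of giving the marginal propensity to consume explicitly and showing it lies in $(0,1)$.

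The paper's route buys independence from differentiability. Its concavity step uses only monotonicity of $g$, the identity $g(s(w))=c(w)$, and the interval property of the wealth domain. This matches the paper's remark that the argument need not differentiate the expectation. It is also the clean way to make precise the ``hypograph'' fallback you sketched, which you would not need anyway given the derivative argument.
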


\begin{proof}
This argument adapts \citet[Lemma~2]{Toda2021JME}. Because the consumption domain here is an arbitrary open interval, I first verify that the saving levels at which $g$ is defined are exactly those attained by the optimal saving function. Let $S$ denote the domain of $g$. Because $(\beta,R,Y)$ has finite support and $C$ is an open interval, $S$ is an open interval. Moreover, $g$ is strictly increasing: if $s_1<s_2$, then $Rs_1+Y<Rs_2+Y$ in every state, so the expectation in \eqref{eq:g} strictly decreases, and applying the decreasing function $\phi^{-1}$ reverses the inequality.

For any $s\in S$, define $w\coloneqq s+g(s)$. At this wealth, $s$ is feasible and satisfies the first-order condition
\begin{equation*}
	\phi(w-s)=\E[\beta R\phi(Rs+Y)].
\end{equation*}
The one-period objective is strictly concave in saving, so $s$ is the unique optimum. Thus $w$ belongs to the wealth domain, and, writing $s(w)\coloneqq w-c(w)$ for optimal saving, we have
\begin{equation}
	s(w)=s,\qquad c(w)=g(s). \label{eq:consumptionSavingLink}
\end{equation}
Conversely, the Euler equation at any optimum implies $g(s(w))=c(w)$. Hence $S$ is exactly the range of the optimal saving function.

Take $s_1,s_2\in S$ and $\alpha\in[0,1]$. Define
\begin{equation*}
	s_\alpha\coloneqq(1-\alpha)s_1+\alpha s_2,\qquad
	\bar c\coloneqq(1-\alpha)g(s_1)+\alpha g(s_2),
\end{equation*}
and let $w_i\coloneqq s_i+g(s_i)$ for $i=1,2$ and $\bar w\coloneqq(1-\alpha)w_1+\alpha w_2=s_\alpha+\bar c$. The maintained assumption that the wealth domain is an interval implies that $\bar w$ belongs to it. By the concavity of $c$ and \eqref{eq:consumptionSavingLink}, we have
\begin{equation*}
	c(\bar w)\ge(1-\alpha)c(w_1)+\alpha c(w_2)=\bar c.
\end{equation*}
Therefore $s(\bar w)=\bar w-c(\bar w)\le s_\alpha$. Since $g$ is increasing and $g(s(\bar w))=c(\bar w)$, it follows that $g(s_\alpha)\ge g(s(\bar w))=c(\bar w)\ge\bar c$. Thus $g$ is concave on $S$.
\end{proof}

\subsection{Aggregation lemmas}\label{subsec:proofAggregation}

The finite-dimensional aggregation argument underlying the following lemma appears in Sections 3.1--3.2 of \citet{SuenConcave}. He studies an aggregator with weights of the form $\beta R P_n$ and proves its concavity by a Hessian and Cauchy--Schwarz argument. The following lemma reformulates that argument for arbitrary positive weights with total mass at most one and adds an equivalence with the one-dimensional restrictions used in Lemma 3 of \citet{Toda2021JME}. In the present stochastic setting, the restriction on the sum of the weights corresponds to conditional impatience \eqref{eq:conditionalImpatience}.

\begin{lem}[Finite-dimensional aggregator]\label{lem:Phi}
Under Assumption~\ref{asmp:utility}, let $\phi\coloneqq u'$. Since $\phi'=u''<0$ and the Inada conditions imply $\phi(C)=(0,\infty)$, the function $\phi$ is a strictly decreasing bijection from $C$ onto $(0,\infty)$. Hence define $\Phi:(0,\infty)\to(0,\infty)$ by
\begin{equation*}
	\Phi(m)\coloneqq \frac{[\phi'(\phi^{-1}(m))]^2}{\phi''(\phi^{-1}(m))}.
\end{equation*}
For $N\in\N$, $p\in\R_{++}^N$ such that $\sum_{n=1}^Np_n\le 1$, and $x\in C^N$, define the finite-dimensional aggregator
\begin{equation}
	\cA_p(x)\coloneqq\phi^{-1}\left(\sum_{n=1}^Np_n\phi(x_n)\right). \label{eq:multivariateAggregator}
\end{equation}
For $v\in\R^N$, define its one-dimensional restriction
\begin{equation}
	G(s;p,x,v)\coloneqq\cA_p(x+vs)
	=\phi^{-1}\left(\sum_{n=1}^Np_n\phi(x_n+v_ns)\right) \label{eq:oneDimensionalAggregator}
\end{equation}
on the open interval
\begin{equation}
	D(x,v)\coloneqq \set{s\in\R:x_n+v_ns\in C\text{ for all }n=1,\dots,N}. \label{eq:oneDimensionalDomain}
\end{equation}
Then the following statements are equivalent.
\begin{enumerate}
	\item\label{item:Phi1} The function $G(\cdot;p,x,v)$ is concave on $D(x,v)$ for every $N,p,x$ as above and every $v\in\R_{++}^N$.
	\item\label{item:Phi2} The function $\cA_p$ is concave on $C^N$ for every $N$ and $p$ as above.
	\item\label{item:Phi3} The function $\Phi$ is concave on $(0,\infty)$.
\end{enumerate}
\end{lem}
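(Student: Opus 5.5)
The plan is to prove the cycle \ref{item:Phi2}$\Rightarrow$\ref{item:Phi1}$\Rightarrow$\ref{item:Phi3}$\Rightarrow$\ref{item:Phi2}. Everything rests on one second-order computation for the aggregator. Since $u$ is $C^3$, $\phi=u'$ is $C^2$ with $\phi'<0$ and $\phi''=u'''>0$, so $\cA_p$ is $C^2$ on the open convex set $C^N$.

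First I compute the Hessian. Write $A=\cA_p(x)$, $m_n=\phi(x_n)$ and $m=\sum_n p_nm_n=\phi(A)$. Differentiating $\phi(\cA_p(x))=\sum_np_n\phi(x_n)$ once gives
\begin{equation*}
\phi'(A)\,\partial_i A=p_i\phi'(x_i).
\end{equation*}
Differentiating again gives
\begin{equation*}
\phi'(A)\,\partial_{ij}A=p_i\phi''(x_i)\delta_{ij}-\phi''(A)\,\partial_iA\,\partial_jA.
\end{equation*}
Because $\phi'(A)<0$, the quadratic form $v^\top\nabla^2\cA_p(x)v$ is nonpositive if and only if
\begin{equation*}
\phi''(A)\Bigl(\sum_n p_n\phi'(x_n)v_n\Bigr)^2\le\phi'(A)^2\sum_np_n\phi''(x_n)v_n^2 .
\end{equation*}
Dividing by $\phi''(A)>0$ and using $\phi'(A)^2/\phi''(A)=\Phi(m)$, this condition reads
\begin{equation*}
\Bigl(\sum_n p_n\phi'(x_n)v_n\Bigr)^2\le\Phi(m)\sum_np_n\phi''(x_n)v_n^2. \tag{$\ast$}
\end{equation*}
The same inequality with fixed $v$ is exactly $G''(0;p,x,v)\le0$, since $G$ is the restriction of $\cA_p$ to a line.

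\textbf{\ref{item:Phi2}$\Rightarrow$\ref{item:Phi1}.} This is immediate, because the restriction of a concave function to a line is concave.

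\textbf{\ref{item:Phi3}$\Rightarrow$\ref{item:Phi2}.} Apply Cauchy--Schwarz with weights $p_n\phi''(x_n)>0$:
\begin{equation*}
\Bigl(\sum_n p_n\phi'(x_n)v_n\Bigr)^2\le\Bigl(\sum_np_n\phi''(x_n)v_n^2\Bigr)\Bigl(\sum_np_n\Phi(m_n)\Bigr).
\end{equation*}
So ($\ast$) holds for every $v$ once I show $\sum_np_n\Phi(m_n)\le\Phi\bigl(\sum_np_nm_n\bigr)$ whenever $q\coloneqq\sum_np_n\le1$.

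Write $\sum_np_n\Phi(m_n)=q\sum_n(p_n/q)\Phi(m_n)$. Jensen's inequality bounds this by $q\,\Phi(y)$ with $y=m/q$. It remains to show $q\,\Phi(y)\le\Phi(qy)$ for $q\in(0,1]$. This is the step needing care, and it uses $\Phi>0$. A positive concave function on $(0,\infty)$ has $\Phi(0+)\ge0$ (the limit exists by concavity). Concavity on the segment from $0^+$ to $y$ then gives $\Phi(qy)\ge q\Phi(y)+(1-q)\Phi(0+)\ge q\Phi(y)$. Hence the Hessian is negative semidefinite everywhere on $C^N$, and $\cA_p$ is concave there.

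\textbf{\ref{item:Phi1}$\Rightarrow$\ref{item:Phi3}.} This is the main obstacle, because \ref{item:Phi1} only allows strictly positive directions. The key observation is that the equality direction of Cauchy--Schwarz, $v_n\propto\phi'(x_n)/\phi''(x_n)$, can be chosen positive: take $v_n\coloneqq-\phi'(x_n)/\phi''(x_n)>0$. Then both sides of ($\ast$) become multiples of $\sum_np_n\Phi(m_n)$, and $G''(0)\le0$ reduces exactly to
\begin{equation*}
\sum_np_n\Phi(m_n)\le\Phi\Bigl(\sum_np_nm_n\Bigr).
\end{equation*}
Now take $N=2$, weights $p=(\lambda,1-\lambda)$ with $\lambda\in(0,1)$, and arbitrary $m_1,m_2>0$, choosing $x_n=\phi^{-1}(m_n)$ so that the $m_n$ are as desired. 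The reduced inequality is then Jensen's inequality for $\Phi$ at two points, so $\Phi$ is concave. Continuity of $\Phi$ is automatic, since $\phi\in C^2$ and $\phi^{-1}$ is continuous.
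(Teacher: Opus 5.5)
Your proposal is correct and follows the paper's proof essentially step for step. It uses the same second-order condition combined with Cauchy--Schwarz, the same positive equality direction $v_n=-\phi'(x_n)/\phi''(x_n)$ to recover Jensen's inequality for $\Phi$, and positivity of $\Phi$ to get $\Phi(qy)\ge q\Phi(y)$ when the weights sum to less than one; the differences are cosmetic (the Hessian of $\cA_p$ instead of $G''$ along lines, the limit $\Phi(0+)\ge 0$ instead of the paper's $\epsilon\downarrow 0$ argument, and $N=2$ in the necessity direction).
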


\begin{proof}
Clearly, \ref{item:Phi2}$\Rightarrow$\ref{item:Phi1}. The implication \ref{item:Phi1}$\Rightarrow$\ref{item:Phi3}, which recovers concavity of $\Phi$ from its one-dimensional restrictions, adapts Lemma 3 of \citet{Toda2021JME}, which uses the technique in \citet[\S3.16]{HardyLittlewoodPolyaInequalities}. The implication \ref{item:Phi3}$\Rightarrow$\ref{item:Phi2} adapts the Hessian and Cauchy--Schwarz argument in Section 3.2 of \citet{SuenConcave}.

\medskip
\noindent \ref{item:Phi1}$\Rightarrow$\ref{item:Phi3} Suppose \ref{item:Phi1} holds, so $G$ is concave. Because the Inada conditions imply $\phi(C)=(0,\infty)$, the functions $\cA_p$ and $G$ in \eqref{eq:multivariateAggregator} and \eqref{eq:oneDimensionalAggregator} are well-defined. Fix $N,p,x,v$ as in \ref{item:Phi1} and $s\in D(x,v)$. To simplify notation, write $G=G(s;p,x,v)$, $z_n=x_n+v_ns$, and $\sum=\sum_{n=1}^N$. Differentiating
\begin{equation*}
	\phi(G)=\sum p_n\phi(z_n)
\end{equation*}
with respect to $s$ twice, we obtain
\begin{align*}
	\phi'(G)G'&=\sum p_nv_n\phi'(z_n),\\
	\phi''(G)(G')^2+\phi'(G)G''&=\sum p_nv_n^2\phi''(z_n).
\end{align*}
Eliminating $G'$ yields
\begin{equation}
	\phi'(G)^3G''=\phi'(G)^2\sum p_nv_n^2\phi''(z_n)-\phi''(G)\left(\sum p_nv_n\phi'(z_n)\right)^2. \label{eq:GsecondDerivative}
\end{equation}
Since $\phi'=u''<0$ and $\phi''=u'''>0$, it follows that $G''(s)\le 0$ if and only if
\begin{equation}
	\Phi\left(\sum p_n\phi(z_n)\right)\ge \frac{\left(\sum p_nv_n\phi'(z_n)\right)^2}{\sum p_nv_n^2\phi''(z_n)}. \label{eq:Gconcave}
\end{equation}
We now apply the Cauchy--Schwarz inequality. To this end, define
\begin{equation*}
	a_n\coloneqq\sqrt{p_n\phi''(z_n)}v_n
	\quad\text{and}\quad
	b_n\coloneqq\frac{\sqrt{p_n}\phi'(z_n)}{\sqrt{\phi''(z_n)}}.
\end{equation*}
Then $a_nb_n=p_nv_n\phi'(z_n)$, and hence
\begin{align*}
	\left(\sum p_nv_n\phi'(z_n)\right)^2
	&=\left(\sum a_nb_n\right)^2\le\left(\sum a_n^2\right)\left(\sum b_n^2\right)\\
	&=\left(\sum p_nv_n^2\phi''(z_n)\right)
	\left(\sum p_n\frac{\phi'(z_n)^2}{\phi''(z_n)}\right).
\end{align*}
Since the first factor on the last line is strictly positive, dividing by it gives
\begin{equation}
	\frac{\left(\sum p_nv_n\phi'(z_n)\right)^2}{\sum p_nv_n^2\phi''(z_n)}
	\le \sum p_n\frac{\phi'(z_n)^2}{\phi''(z_n)}
	=\sum p_n\Phi(\phi(z_n)). \label{eq:CS}
\end{equation}
Equality in \eqref{eq:CS} holds if the vectors $(a_n)$ and $(b_n)$ are proportional. Because $a_n>0$ and $b_n<0$, this is equivalent to
\begin{equation}
	v_n=-k\frac{\phi'(z_n)}{\phi''(z_n)}\quad\text{for all }n \label{eq:CSequality}
\end{equation}
for some $k>0$.

Let $p\in\R_{++}^N$ satisfy $\sum p_n=1$ and let $m\in\R_{++}^N$ be arbitrary. Set $x_n=\phi^{-1}(m_n)$ and
\begin{equation*}
	v_n=-\frac{\phi'(x_n)}{\phi''(x_n)}>0.
\end{equation*}
Evaluating \eqref{eq:Gconcave} at $s=0$ and using the equality condition in \eqref{eq:CS}, we obtain
\begin{equation*}
	\Phi\left(\sum p_nm_n\right)\ge \sum p_n\Phi(m_n).
\end{equation*}
Since $N,p,m$ are arbitrary, $\Phi$ is concave on $(0,\infty)$, so \ref{item:Phi3} holds.

\medskip
\noindent \ref{item:Phi3}$\Rightarrow$\ref{item:Phi2} Suppose \ref{item:Phi3} holds, so $\Phi$ is concave. Fix $N$ and $p$ as in the statement of the lemma, $x\in C^N$, and an arbitrary $v\in\R^N$. If $v=0$, then $G(\cdot;p,x,v)$ is constant and hence concave. Therefore, suppose $v\neq 0$, so the denominator in \eqref{eq:Gconcave} is strictly positive. Let $s\in D(x,v)$, with $D$ as in \eqref{eq:oneDimensionalDomain}, and set $z_n\coloneqq x_n+v_ns$ and $\lambda\coloneqq\sum p_n\le 1$. Define $q_n\coloneqq p_n/\lambda$, $m_n\coloneqq\phi(z_n)$, and $\bar{m}\coloneqq\sum q_nm_n$. If $\lambda<1$, then for every $\epsilon>0$, the concavity and positivity of $\Phi$ imply
\begin{equation*}
	\Phi(\lambda\bar{m}+(1-\lambda)\epsilon)\ge \lambda\Phi(\bar{m})+(1-\lambda)\Phi(\epsilon)\ge \lambda\Phi(\bar{m}).
\end{equation*}
Letting $\epsilon\downarrow 0$ and using the continuity of $\Phi$, we obtain $\Phi(\lambda\bar{m})\ge \lambda\Phi(\bar{m})$. This inequality is immediate if $\lambda=1$. Therefore, using the concavity of $\Phi$ once more, we obtain
\begin{align*}
	\Phi\left(\sum p_nm_n\right)&=\Phi(\lambda\bar{m})\ge \lambda\Phi(\bar{m})\\
	&\ge \lambda\sum q_n\Phi(m_n)=\sum p_n\Phi(m_n).
\end{align*}
The Cauchy--Schwarz inequality \eqref{eq:CS}, which remains valid for arbitrary $v\neq 0$, therefore yields
\begin{equation*}
	\Phi\left(\sum p_nm_n\right)\ge \sum p_n\Phi(m_n)
	\ge \frac{\left(\sum p_nv_n\phi'(z_n)\right)^2}{\sum p_nv_n^2\phi''(z_n)}.
\end{equation*}
By \eqref{eq:Gconcave} and \eqref{eq:GsecondDerivative}, $G''(s)\le 0$ for every $s\in D(x,v)$. Hence $G(\cdot;p,x,v)$ is concave on $D(x,v)$. To verify the definition of concavity of $\cA_p$, take arbitrary $x^1,x^2\in C^N$ and $\alpha\in[0,1]$. Because $C^N$ is convex, definition~\eqref{eq:oneDimensionalDomain} gives $[0,1]\subset D(x^1,x^2-x^1)$. The concavity of the corresponding one-dimensional restriction therefore implies
\begin{align*}
	\cA_p((1-\alpha)x^1+\alpha x^2)
	&=G(\alpha;p,x^1,x^2-x^1)\\
	&\ge (1-\alpha)G(0;p,x^1,x^2-x^1)
	+\alpha G(1;p,x^1,x^2-x^1)\\
	&=(1-\alpha)\cA_p(x^1)+\alpha\cA_p(x^2).
\end{align*}
Thus $\cA_p$ is concave on $C^N$, so \ref{item:Phi2} holds.
\end{proof}

Section 3.4 of \citet{SuenConcave} derives the equivalence between the concavity of $\Phi$ and that of inverse absolute prudence, $-u''/u'''$, by differentiating twice. The next lemma establishes the same equivalence under the maintained assumption that $u$ is three times continuously differentiable, using local absolute continuity and almost-everywhere derivatives instead of imposing further smoothness.

\begin{lem}[Equivalent curvature conditions]\label{lem:PhiRatio}
Under Assumption~\ref{asmp:utility}, let $\Phi$ be as in Lemma~\ref{lem:Phi}. Then $\Phi$ is concave on $(0,\infty)$ if and only if inverse absolute prudence $\psi\coloneqq -u''/u'''$ is concave on $C$.
\end{lem}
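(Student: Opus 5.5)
The plan is to exploit the pointwise identity linking $\Phi$ and $\psi$ through the change of variables $m=\phi(x)$. Write $h\coloneqq\phi^{-1}:(0,\infty)\to C$. Since $\phi=u'$ is $C^2$ with $\phi'=u''<0$, $h$ is a $C^2$ decreasing diffeomorphism. By definition, $\Phi(\phi(x))=u''(x)^2/u'''(x)=-u''(x)\psi(x)$. Equivalently,
\begin{equation*}
	\Phi(m)=-u''(h(m))\,\psi(h(m)),\qquad \psi(x)=\frac{\Phi(\phi(x))}{-u''(x)}.
\end{equation*}
Formally, $\frac{\diff}{\diff x}[-u''\psi]=-u'''\psi-u''\psi'=u''(1-\psi')$. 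Dividing by $\diff m/\diff x=u''$ gives
\begin{equation*}
	\Phi'(m)=1-\psi'(h(m)).
\end{equation*}
Because $h$ is decreasing, $\Phi'$ is nonincreasing in $m$ if and only if $\psi'$ is nonincreasing in $x$. This is the claimed equivalence. The work lies in justifying this computation, since $u$ is only $C^3$. Then $\psi=-u''/u'''$ is merely continuous, and $\Phi$ need not be differentiable either.

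To make this rigorous, I would first establish a regularity transfer. Assume either $\Phi$ is concave on $(0,\infty)$ or $\psi$ is concave on $C$. A finite concave function on an open interval is locally Lipschitz. The two displayed identities express each function as the other composed with a $C^1$ diffeomorphism ($\phi$ or $h$) and multiplied or divided by the positive $C^1$ function $-u''$. Hence whichever function is assumed concave, the other is also locally Lipschitz, so both are locally absolutely continuous. For a locally Lipschitz function composed with a $C^1$ diffeomorphism, the chain rule holds almost everywhere, and null sets are preserved because $h$ and $\phi$ are locally bi-Lipschitz. So the product rule and chain rule give $\Phi'(m)=1-\psi'(h(m))$ for almost every $m$, with $\psi'$ existing at almost every $x$.

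Next I would use the standard characterization that a locally absolutely continuous function on an open interval is concave if and only if its almost-everywhere derivative coincides almost everywhere with a nonincreasing function. One direction holds because concave functions have nonincreasing one-sided derivatives that agree with the derivative almost everywhere. The other holds because the fundamental theorem of calculus for absolutely continuous functions gives monotone difference quotients. If $\psi$ is concave, take a nonincreasing version $\eta$ of $\psi'$. Then $m\mapsto 1-\eta(h(m))$ is nonincreasing, since $h$ is decreasing, and equals $\Phi'$ almost everywhere, so $\Phi$ is concave. Conversely, if $\Phi$ is concave with nonincreasing version $\kappa$ of $\Phi'$, then $x\mapsto 1-\kappa(\phi(x))$ is a nonincreasing version of $\psi'$, because $\phi$ is decreasing, so $\psi$ is concave.

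I expect the main obstacle to be the regularity step: ensuring the almost-everywhere chain rule and the preservation of null sets under $h$ and $\phi$ are applied correctly, so the argument does not implicitly assume $u\in C^4$. Once that is in place, the algebraic identity $u''(1-\psi')/u''=1-\psi'$ and the orientation reversal from the decreasing map $h$ finish the proof.
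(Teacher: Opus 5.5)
Your proposal is correct and follows essentially the same route as the paper. Both arguments use the identity $\Phi(\phi(x))=-\phi'(x)\psi(x)$, transfer local Lipschitz continuity from whichever function is assumed concave to the other, derive the almost-everywhere relation $\Phi'(\phi(x))=1-\psi'(x)$, and invoke the monotone-derivative characterization of concavity for locally absolutely continuous functions, with the decreasing map $\phi$ reversing orientation. Your treatment of null-set preservation under the locally bi-Lipschitz maps $\phi$ and $\phi^{-1}$, and of nonincreasing versions of the derivatives, simply makes explicit details that the paper leaves implicit.
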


\begin{proof}
Since $\phi=u'$, we have $\psi=-\phi'/\phi''$, so the definition of $\Phi$ gives
\begin{equation}
	\Phi(\phi(x))=-\phi'(x)\psi(x). \label{eq:PhiPsi}
\end{equation}
Suppose either $\Phi$ or $\psi$ is concave. Then that function is locally Lipschitz on the interior of its domain. Since $\phi:C\to(0,\infty)$ is a continuously differentiable bijection with $\phi'<0$, \eqref{eq:PhiPsi} implies that the other function is locally Lipschitz as well. Therefore both functions are locally absolutely continuous. Differentiating \eqref{eq:PhiPsi} at points of differentiability and using $\phi''\psi=-\phi'$, we obtain
\begin{equation*}
	\Phi'(\phi(x))=1-\psi'(x)
\end{equation*}
for almost every $x\in C$. Because $\phi$ is strictly decreasing, $\Phi'$ is decreasing almost everywhere on $(0,\infty)$ if and only if $\psi'$ is decreasing almost everywhere on $C$. The result now follows from the characterization of concavity of locally absolutely continuous functions by the monotonicity of their derivatives.
\end{proof}

The probability-measure case of the next lemma follows from a known result about expected-utility certainty equivalents. For the increasing, concave utility index $v=-\phi$, \citet[Theorem~1]{BenTalTeboulle1986} show that the certainty-equivalent functional generated by $v$ is concave if and only if the absolute risk tolerance of $v$ is concave. Because $-v'/v''=-u''/u'''$, their characterization yields the probability-measure case of the concavity assertion below.

The application here requires more than the probability-measure case. To pass from finite-support shocks to general conditional distributions, \citet{SuenConcave} approximates by finite partitions and takes pointwise limits. The lemma below combines this approximation idea with the certainty-equivalent result and extends it to finite measures and measurable functions. For a general utility function satisfying the condition in Theorem~\ref{thm:main}, the measure may be any subprobability measure; allowing mass below one is essential because the relevant conditional measure has mass $\E_t[\beta_{t+1}R_{t+1}]$, which can be strictly less than one. For HARA utility, the conclusion holds for a finite measure of arbitrary mass, as required by Theorem~\ref{thm:unrestricted}. The measure-theoretic form also builds on the concavity-preservation argument in condition (17), Remark 2.1, and Lemma B.8 of \citet{MaStachurskiToda2020JET}.

\begin{lem}[Integral aggregator]\label{lem:multivariateAggregator}
Under Assumption~\ref{asmp:utility}, let $\phi$ and $\Phi$ be as in Lemma~\ref{lem:Phi}. Let $(\Xi,\cX,\nu)$ be a finite measure space such that $0<\nu(\Xi)<\infty$, and let $\cH$ be the set of measurable functions $h:\Xi\to C$ such that $\int_\Xi\phi(h)\diff\nu<\infty$. For $h\in \cH$, define
\begin{equation*}
	\cA_\nu(h)\coloneqq\phi^{-1}\left(\int_\Xi\phi(h)\diff\nu\right).
\end{equation*}
Suppose either that $\nu(\Xi)\le 1$ and $\Phi$ is concave, or that $\Phi(m)=\kappa m$ for all $m>0$ and some $\kappa>0$. Then $\cH$ is convex and $\cA_\nu$ is increasing and concave on $\cH$. Consequently, if $S\subset\R$ is an interval and $h:S\times\Xi\to C$ is such that $h(s,\cdot)\in\cH$ for every $s\in S$ and $h(\cdot,\xi)$ is increasing and concave for every $\xi\in\Xi$, then
\begin{equation}
	s\mapsto\cA_\nu(h(s,\cdot)) \label{eq:multivariateAggregatorComposition}
\end{equation}
is increasing and concave on $S$.
\end{lem}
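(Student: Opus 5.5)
The proof reduces the integral aggregator to the finite-dimensional aggregator of Lemma~\ref{lem:Phi} by approximation, and then derives the composition statement from monotonicity plus concavity.

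\emph{Step 1: convexity of $\cH$ and monotonicity.} For convexity, I need a pointwise bound on $\phi((1-\alpha)h_1+\alpha h_2)$. Since $\phi$ is convex ($\phi''=u'''>0$), $\phi((1-\alpha)h_1+\alpha h_2)\le(1-\alpha)\phi(h_1)+\alpha\phi(h_2)$, which is integrable. So $\cH$ is convex, and $(1-\alpha)h_1+\alpha h_2$ takes values in $C$ because $C$ is an interval. Monotonicity of $\cA_\nu$ is immediate: $\phi$ is decreasing, so $h_1\le h_2$ gives $\int\phi(h_1)\,d\nu\ge\int\phi(h_2)\,d\nu$, and $\phi^{-1}$ is decreasing.

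\emph{Step 2: concavity of $\cA_\nu$.} Fix $h_1,h_2\in\cH$ and $\alpha\in[0,1]$, and set $h_\alpha=(1-\alpha)h_1+\alpha h_2$. I must show
\[
\int\phi(h_\alpha)\,d\nu\le\phi\bigl((1-\alpha)\cA_\nu(h_1)+\alpha\cA_\nu(h_2)\bigr).
\]
First consider simple functions: $h_1,h_2$ constant on the cells $E_1,\dots,E_N$ of a finite partition with $\nu(E_n)>0$. Then $\cA_\nu(h_i)=\cA_p(x^i)$ with $p_n=\nu(E_n)$.
\begin{itemize}
\item In the case $\nu(\Xi)\le1$ with $\Phi$ concave, Lemma~\ref{lem:Phi} (iii)$\Rightarrow$(ii) gives concavity of $\cA_p$ directly.
\item In the HARA case $\Phi(m)=\kappa m$, total mass is unrestricted. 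I redo the Cauchy--Schwarz step of that proof. Because $\Phi$ is linear, $\Phi(\sum p_nm_n)=\sum p_n\Phi(m_n)$ holds for any positive weights, and \eqref{eq:CS} then gives $G''\le0$ without the mass bound.
\end{itemize}
For general $h_i$, I approximate. For each $i$, take simple $h_i^{(k)}$ with values in $C$ that converge pointwise to $h_i$ and satisfy $h_i^{(k)}\ge h_i$, using a common refining partition for $h_1$ and $h_2$. Approximating from above keeps $\phi(h_i^{(k)})\le\phi(h_i)$ integrable, and convexity of $\phi$ gives $\phi(h_\alpha^{(k)})\le(1-\alpha)\phi(h_1)+\alpha\phi(h_2)$. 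Dominated convergence then passes all three integrals to the limit. Continuity of $\phi^{-1}$ lets the simple-function inequality pass to the limit as well.

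I expect Step 2's approximation to be the main obstacle. It needs careful handling of cells of measure zero, where concavity should be tested only on cells of positive mass. It also needs approximants that stay inside the open interval $C$. I would use a dyadic discretization of the values of $h_i$ on compact subintervals of $C$, rounded upward, while leaving fixed the regions where $h_i$ lies near $\sup C$. An alternative is to quote \citet{BenTalTeboulle1986} after normalizing $\nu$. That route works for the HARA case, but for mass below one it still needs the $\lambda$-argument from Lemma~\ref{lem:Phi} at the integral level: $\Phi(\lambda\bar m)\ge\lambda\Phi(\bar m)$.

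\emph{Step 3: composition.} For $s_1,s_2\in S$, concavity of each $h(\cdot,\xi)$ gives $h(s_\alpha,\xi)\ge(1-\alpha)h(s_1,\xi)+\alpha h(s_2,\xi)$ pointwise. Monotonicity of $\cA_\nu$ on $\cH$, with the right-hand side in $\cH$ by Step 1 and the left-hand side in $\cH$ by assumption, combined with concavity of $\cA_\nu$, yields concavity of \eqref{eq:multivariateAggregatorComposition}. Monotonicity of the composition in $s$ follows likewise from $h(\cdot,\xi)$ increasing.
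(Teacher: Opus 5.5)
Your overall structure matches the paper's. You get convexity of $\cH$ and monotonicity of $\cA_\nu$ from the shape of $\phi$, and concavity for simple functions on a common partition from Lemma~\ref{lem:Phi} (or from the linear-$\Phi$ Cauchy--Schwarz computation in the HARA case). You then pass to general functions by approximation and dominated convergence, and finish with the same composition argument.

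\textbf{The gap is in the approximation step.} A simple function takes finitely many values in the open interval $C$, so its maximum is a point of $C$. Hence no simple $h_i^{(k)}\ge h_i$ can exist unless $h_i$ is essentially bounded above by a point of $C$. This fails, for instance, when $C=(0,\infty)$ and $h_i$ is unbounded. That case is exactly the one needed in Proposition~\ref{prop:sufficiency}, where $h=c_{t+1}(R_{t+1}s+Y_{t+1})$ and income or returns may have unbounded support. Your patch of ``leaving fixed the regions where $h_i$ lies near $\sup C$'' makes the approximants non-simple. Lemma~\ref{lem:Phi} then no longer applies to them, and you give no other argument for those regions.

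\textbf{The fix} is to approximate from above only near $\inf C$, where $\phi$ blows up, and from \emph{below} near $\sup C$, where $\phi$ is small. Take compact intervals $[a_k,b_k]\uparrow C$ and build $h_i^{(k)}$ as follows:
\begin{itemize}
\item on $\{h_i\le b_k\}$, round $h_i$ upward onto a fine grid in $[a_k,b_k]$, sending values below $a_k$ to $a_k$;
\item on $\{h_i>b_k\}$, set $h_i^{(k)}=b_k$.
\end{itemize}
These approximants are simple, lie in $[a_k,b_k]$, converge pointwise to $h_i$, and satisfy
\[
\phi\bigl(h_i^{(k)}\bigr)\le\max\{\phi(h_i),\phi(b_k)\}\le\phi(h_i)+\phi(b_1).
\]
The right-hand side is integrable precisely because $\nu(\Xi)<\infty$. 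Convexity of $\phi$ then gives the majorant $(1-\alpha)\phi(h_1)+\alpha\phi(h_2)+\phi(b_1)$ for $\phi$ of the convex combination, so your dominated-convergence passage goes through.

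This is what the paper does. It uses the truncation $T_k(c)=\min\{\max\{c,a_k\},b_k\}$, approximates $T_k(h_i)$ uniformly by simple functions valued in $[a_k,b_k]$, and applies dominated convergence as $k\to\infty$. Your single-sequence version is equivalent once the upper end is handled this way.
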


\begin{proof}
Since $C$ is convex and $\phi$ is convex, for $h_1,h_2\in\cH$ and $\alpha\in[0,1]$, the function $h_\alpha\coloneqq(1-\alpha)h_1+\alpha h_2$ maps $\Xi$ into $C$ and satisfies
\begin{equation*}
	\phi(h_\alpha)\le(1-\alpha)\phi(h_1)+\alpha\phi(h_2).
\end{equation*}
Hence $\int_\Xi\phi(h_\alpha)\diff\nu<\infty$, so $\cH$ is convex. If $h_1,h_2\in\cH$ and $h_1\le h_2$, then $\phi(h_1)\ge\phi(h_2)$ because $\phi$ is strictly decreasing. Since $\phi^{-1}$ is also strictly decreasing, it follows that $\cA_\nu(h_1)\le\cA_\nu(h_2)$. Thus $\cA_\nu$ is increasing.

To show the concavity of $\cA_\nu$, suppose first that $h_1$ and $h_2$ are simple functions on a common finite measurable partition $\set{E_n}_{n=1}^N$ of $\Xi$. Write $h_i=x_{in}$ on $E_n$ and $p_n\coloneqq\nu(E_n)$, omitting sets of zero measure. If $\nu(\Xi)\le 1$ and $\Phi$ is concave, Lemma~\ref{lem:Phi}\ref{item:Phi2} implies that $\cA_p$ is concave. Alternatively, suppose $\Phi(m)=\kappa m$. Then for arbitrary $p,m\in\R_{++}^N$,
\begin{equation*}
	\Phi\left(\sum p_nm_n\right)=\sum p_n\Phi(m_n).
\end{equation*}
The calculations leading to \eqref{eq:Gconcave} and \eqref{eq:CS}, which do not use the restriction on $\sum p_n$, together with the one-dimensional restriction argument in the proof of Lemma~\ref{lem:Phi}, show that $\cA_p$ is concave for arbitrary positive weights $p$. Thus, in either case, for every $\alpha\in[0,1]$,
\begin{align}
	\cA_\nu((1-\alpha)h_1+\alpha h_2)
	&=\cA_p((1-\alpha)x_1+\alpha x_2)\notag\\
	&\ge(1-\alpha)\cA_p(x_1)+\alpha\cA_p(x_2)\notag\\
	&=(1-\alpha)\cA_\nu(h_1)+\alpha\cA_\nu(h_2). \label{eq:simpleAggregatorConcavity}
\end{align}
For general $h_1,h_2\in\cH$, choose compact intervals $[a_k,b_k]\uparrow C$ and define $T_k(c)\coloneqq\min\set{\max\set{c,a_k},b_k}$. For each $k$, approximate $T_k(h_1)$ and $T_k(h_2)$ uniformly by simple functions taking values in $[a_k,b_k]$ and constant on the cells of a common finite measurable partition. The continuity of $\phi$ and $\phi^{-1}$ then allows passage to the uniform limit in \eqref{eq:simpleAggregatorConcavity}. Thus \eqref{eq:simpleAggregatorConcavity} holds with $h_i$ replaced by $T_k(h_i)$.

It remains to let $k\to\infty$. We have $T_k(h_i)\to h_i$ pointwise and
\begin{equation*}
	\phi(T_k(h_i))\le\phi(h_i)+\phi(b_1),\qquad i=1,2.
\end{equation*}
By convexity of $\phi$, these bounds also provide an integrable majorant for
\begin{equation*}
	\phi((1-\alpha)T_k(h_1)+\alpha T_k(h_2)).
\end{equation*}
The dominated convergence theorem therefore yields convergence of all three integrals in the truncated version of \eqref{eq:simpleAggregatorConcavity}. Using the continuity of $\phi^{-1}$ once more and letting $k\to\infty$ proves that $\cA_\nu$ is concave on $\cH$.

The monotonicity of the function in \eqref{eq:multivariateAggregatorComposition} follows directly from the monotonicity of $h(\cdot,\xi)$ and $\cA_\nu$. For concavity, fix $s_1,s_2\in S$ and $\alpha\in[0,1]$, and let $s_\alpha\coloneqq(1-\alpha)s_1+\alpha s_2$. Pointwise concavity of $h$ gives
\begin{equation*}
	h(s_\alpha,\cdot)\ge(1-\alpha)h(s_1,\cdot)+\alpha h(s_2,\cdot).
\end{equation*}
Applying first the monotonicity and then the concavity of $\cA_\nu$ proves that the function in \eqref{eq:multivariateAggregatorComposition} is concave.
\end{proof}

\subsection{Main characterization theorems}\label{subsec:proofMain}

The next proposition proves the necessity part of Theorem~\ref{thm:main} using only one-period problems with finite-support shocks and no borrowing constraint.

\begin{prop}[Necessity]\label{prop:necessity}
Under Assumption~\ref{asmp:utility}, suppose the consumption function is concave in every optimal saving problem in
\begin{equation*}
	\OS(\tilde{\beta},\tilde{R},\tilde{Y},\infty;\ \E[\beta R]\le1,\ T=1)
\end{equation*}
in which $(\beta,R,Y)$ has finite support. Then $-u''/u'''$ is concave.
\end{prop}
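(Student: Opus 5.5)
The plan is to pass from concave consumption to concavity of $\Phi$ through Lemma~\ref{lem:g}, then through the implication \ref{item:Phi1}$\Rightarrow$\ref{item:Phi3} of Lemma~\ref{lem:Phi}, and finally to concavity of $\psi$ through Lemma~\ref{lem:PhiRatio}.

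\textbf{Step 1 (embedding).} Fix $N$, $p\in\R_{++}^N$ with $\sum p_n\le1$, $x\in C^N$, and $v\in\R_{++}^N$. Build a one-period problem with unrestricted borrowing and $N$ equally likely states, so each state has probability $1/N$. In state $n$ set
\begin{equation*}
	R_n\coloneqq v_n,\qquad \beta_n\coloneqq Np_n/v_n,\qquad Y_n\coloneqq x_n.
\end{equation*}
Then $\E[\beta R]=\sum p_n\le1$, so the problem lies in the class of the proposition, and
\begin{equation*}
	g(s)=\phi^{-1}\left(\sum_n p_n\phi(x_n+v_ns)\right)=G(s;p,x,v).
\end{equation*}
The domain of $g$ is exactly $D(x,v)$.

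\textbf{Step 2 (sign of income).} The standing setup requires $Y_t\ge0$, but $x_n$ may be negative, and the proof of Lemma~\ref{lem:g} assumes $(\beta,R,Y)\gg0$. I would handle this by translation. Pick $s_0\in D(x,v)$ with $s_0$ negative enough, which is possible if $D(x,v)$ is unbounded below. Replace $x_n$ by $x_n+v_ns_0$ and shift the argument of $G$; this leaves the concavity of $G$ unchanged. If $C$ is bounded below, $D(x,v)$ is bounded below, and then I would instead write $Y_n=x_n+v_n\sigma$ for a fixed $\sigma$, reparametrize saving as $s-\sigma$, and check that wealth and saving translate consistently. Concavity is invariant under translating both wealth and saving, so any constant $\sigma$ making all $Y_n>0$ works.

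\textbf{Step 3 (concave $G$).} By hypothesis the consumption function of this problem is concave. Lemma~\ref{lem:g} then gives that $g$, and hence $G(\cdot;p,x,v)$, is concave on $D(x,v)$. Since $N,p,x,v$ were arbitrary, condition \ref{item:Phi1} of Lemma~\ref{lem:Phi} holds.

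\textbf{Step 4 (conclusion).} The proof of \ref{item:Phi1}$\Rightarrow$\ref{item:Phi3} in Lemma~\ref{lem:Phi} uses only weights summing to one, which are admissible here. It gives that $\Phi$ is concave on $(0,\infty)$. Lemma~\ref{lem:PhiRatio} then gives that $-u''/u'''$ is concave.

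The main obstacle is the bookkeeping in Step 2. It must keep income nonnegative, keep $\beta,R>0$, and keep the required wealth domain an interval. It must do this without destroying the identification $g=G$ up to an affine change of variable. Finiteness of the expectations is automatic with finite support.
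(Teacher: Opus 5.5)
Your proposal is correct and follows the paper's proof. You embed $G(\cdot;p,x,v)$ as the function $g$ of a finite-support one-period problem with $\E[\beta R]=\sum_n p_n\le1$, then chain Lemma~\ref{lem:g}, the implication \ref{item:Phi1}$\Rightarrow$\ref{item:Phi3} of Lemma~\ref{lem:Phi}, and Lemma~\ref{lem:PhiRatio}; your equal-probability choice $\beta_n=Np_n/v_n$ is equivalent to the paper's $\pi_n=p_n/\lambda$, $\beta_n=\lambda/v_n$.

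The only blemish is Step~2, where the case split is unnecessary and its first branch has the sign backwards: since $v_n>0$, shifting by a negative $s_0$ lowers $x_n+v_ns_0$ rather than making it positive. Your second branch, $Y_n=x_n+v_n\sigma$ with $\sigma$ large and no need for $\sigma$ to lie in $D(x,v)$, works for every $C$ and is exactly the paper's construction with $\sigma=-s_0$.
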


\begin{proof}
Fix $N\in\N$, $p,v\in\R_{++}^N$, and $x\in C^N$ such that $\lambda\coloneqq\sum_{n=1}^Np_n\le 1$. Choose $s_0\in\R$ sufficiently negative that $y_n\coloneqq x_n-v_ns_0>0$ for all $n$. Consider a one-period optimal saving problem in which $(\beta,R,Y)$ takes the value
\begin{equation}
	(\beta_n,R_n,y_n)\coloneqq\left(\frac{\lambda}{v_n},v_n,x_n-v_ns_0\right) \label{eq:construction}
\end{equation}
with probability $\pi_n\coloneqq p_n/\lambda$. This construction is admissible because $\E[\beta R]=\sum_{n=1}^N\pi_n\lambda=\lambda\le 1$. The function $g$ in \eqref{eq:g} associated with this problem satisfies
\begin{align}
	g(s_0+s)&=\phi^{-1}\left(\sum_{n=1}^N\pi_n\beta_nR_n\phi(R_n(s_0+s)+y_n)\right)\notag\\
	&=\phi^{-1}\left(\sum_{n=1}^Np_n\phi(x_n+v_ns)\right)=G(s;p,x,v). \label{eq:gG}
\end{align}
By assumption, the consumption function of this one-period problem is concave. Lemma~\ref{lem:g} therefore implies that $g$ is concave on its domain. It follows from \eqref{eq:gG} that $G(\cdot;p,x,v)$ is concave on the domain $D(x,v)$ defined in \eqref{eq:oneDimensionalDomain}. Since $N,p,x,v$ are arbitrary, Lemma~\ref{lem:Phi} implies that $\Phi$ is concave. Lemma~\ref{lem:PhiRatio} then implies that $-u''/u'''$ is concave.
\end{proof}

I next prove sufficiency for both characterization theorems. The proof follows the backward-induction argument of \citet{SuenConcave}: rewrite the Euler equation in terms of a marginal-utility aggregator and use its concavity to propagate concavity of the consumption policy. Lemma~\ref{lem:multivariateAggregator} permits this argument with stochastic discount factors, returns, income, and borrowing limits and with general conditional distributions. The notation $g_{t+1}$ below follows \eqref{eq:g} and \citet{Toda2021JME}: it gives the current consumption level implied by a choice of current saving, now accounting for the continuation consumption function.

\begin{prop}[Sufficiency]\label{prop:sufficiency}
Under Assumption~\ref{asmp:utility}, the following statements hold.
\begin{enumerate}
	\item\label{item:suffCI} If $-u''/u'''$ is concave, then every optimal saving problem in the conditionally impatient class \eqref{eq:conditionallyImpatientClass} has increasing and concave consumption functions and increasing and convex saving functions.
	\item\label{item:suffHARA} If $u$ is HARA, then the same conclusions hold for every optimal saving problem in $\OS(\tilde{\beta},\tilde{R},\tilde{Y},\tilde{b})$.
\end{enumerate}
\end{prop}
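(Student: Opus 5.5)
The plan is backward induction on $t$, from $T$ down to $0$. The inductive hypothesis is that for each history $\omega$, $c_{t+1}(\cdot,\omega)$ is increasing and concave on $D_{t+1}(\omega)$, with values in $C$. At $t=T$ we have $b_T=0$ and the agent consumes everything, so $c_T(w)=w$. This is increasing and concave, and saving $w-c_T(w)=0$ is trivially increasing and convex.

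For the inductive step, fix a date $t<T$ and a history. The conditional distribution of $(\beta_{t+1},R_{t+1},Y_{t+1})$ and of the next-period history, given $\cF_t$, induces a measure on a space $\Xi$ of next-period states. I use the measure
\begin{equation*}
	\nu(\diff\xi)=\beta_{t+1}R_{t+1}\,\mathrm{P}_t(\diff\xi).
\end{equation*}
Under conditional impatience, $\nu(\Xi)=\E_t[\beta_{t+1}R_{t+1}]\le 1$. In the HARA case the mass is only finite, but HARA gives $\Phi$ linear: since $\psi$ is affine, Lemma~\ref{lem:PhiRatio}'s identity $\Phi'(\phi(x))=1-\psi'(x)$ makes $\Phi'$ constant, and $\Phi(0^+)=0$ should hold for HARA (to be checked). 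Next define, for saving $s$ in the interval $S$ of saving levels that keep $R_{t+1}s+Y_{t+1}\in D_{t+1}$ almost surely,
\begin{equation*}
	h(s,\xi)\coloneqq c_{t+1}(R_{t+1}s+Y_{t+1},\xi),\qquad g_{t+1}(s)\coloneqq\cA_\nu(h(s,\cdot)).
\end{equation*}
Since $R_{t+1}>0$, each $h(\cdot,\xi)$ is increasing and concave by the induction hypothesis. Lemma~\ref{lem:multivariateAggregator}, combined with Lemma~\ref{lem:PhiRatio} which converts concavity of $\psi$ into concavity of $\Phi$, then shows that $g_{t+1}$ is increasing and concave on $S$.

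The next step links $c_t$ to $g_{t+1}$. By the envelope condition, $u'(c_{t+1})$ is the marginal value of wealth. So the Euler equation with a possibly binding constraint reads: $c_t(w)=\min\{g_{t+1}(s),\,w+b_t\}$ with $s=w-c_t(w)$. On the unconstrained region, optimal saving $s(w)$ solves $s+g_{t+1}(s)=w$. Because $\mathrm{id}+g_{t+1}$ is strictly increasing and concave, its inverse $s(w)$ is increasing and convex. Then $c_t(w)=w-s(w)$ is concave, and it is increasing because $g_{t+1}$ is increasing: indeed $c=g(s(w))$ with $s$ increasing.

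To handle the borrowing limit, let $\bar s=-b_t$. The policy is $s(w)=\max\{\bar s,(\mathrm{id}+g_{t+1})^{-1}(w)\}$, which is a maximum of convex increasing functions and hence convex and increasing. Consumption is $c_t=w-s(w)$, which is concave. It is increasing because on the constrained region $c_t=w+b_t$, which has slope $1$, and on the unconstrained region $c_t=g_{t+1}(s(w))$, which is increasing. I would treat $b_t=\infty$ as $\bar s=\inf S$.

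The main obstacle is the justification of the Euler/envelope representation in full generality. The shocks have general conditional distributions and the borrowing limits are stochastic, so one must verify that $h(s,\cdot)\in\cH$, meaning the conditional expectations of marginal utility are finite. This follows from the standing finiteness assumption on the domains. One must also verify that the continuation value is differentiable with derivative $u'(c_{t+1})$, and that the first-order condition characterizes the optimum. For the latter I would use strict concavity of the continuation value in $s$ (from strict concavity of $u$), together with the Inada conditions to rule out corner solutions at $\inf C$. To handle measurability of the conditional structure I would use regular conditional distributions, working history by history.
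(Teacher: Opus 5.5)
Your proposal is correct and follows the paper's proof essentially step for step: backward induction from $c_T(w)=w$, the measure $\beta_{t+1}R_{t+1}\,\mathrm{P}_t$ with Lemma~\ref{lem:multivariateAggregator} giving an increasing concave $g_{t+1}$, inversion of $s\mapsto s+g_{t+1}(s)$, and $s_t=\max\{-b_t,s_t^\circ\}$. The paper gets monotonicity of $c_t$ from the bound $0\le s_t(w_2)-s_t(w_1)\le w_2-w_1$ rather than your piecewise argument, which is an inessential difference, and the HARA intercept you left ``to be checked'' is immediate and is how the paper obtains linearity: combining $\Phi(\phi(x))=-\phi'(x)\psi(x)$ with $-\phi'(x)/\phi(x)=1/(ax+b)$ and $\psi(x)=(ax+b)/(a+1)$ gives $\Phi(m)=m/(a+1)$ exactly.
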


\begin{proof}
For claim~\ref{item:suffCI}, Lemma~\ref{lem:PhiRatio} implies that $\Phi$ is concave. For claim~\ref{item:suffHARA}, equations \eqref{eq:ARA}, \eqref{eq:PhiPsi}, and the expression for inverse absolute prudence following \eqref{eq:HARA} imply
\begin{equation}
	\Phi(m)=\frac{m}{a+1},\qquad m>0, \label{eq:PhiHARA}
\end{equation}
where $a+1>0$ because $u'''>0$. I prove both claims by backward induction. Since $b_T=0$ and utility is strictly increasing, the terminal policies on the feasible wealth domain are
\begin{equation*}
	c_T(w)=w
	\quad\text{and}\quad
	s_T(w)\coloneqq w-c_T(w)=0.
\end{equation*}
Thus $c_T$ is increasing and concave, while $s_T$ is increasing and convex.

Suppose the claim holds at time $t+1$, and fix a time $t$ history. On the conditional next-state space, define the finite measure $\nu_t(A)\coloneqq\E_t[\beta_{t+1}R_{t+1}\boldsymbol{1}_A]$. Then $0<\nu_t(\Omega)=\E_t[\beta_{t+1}R_{t+1}]<\infty$. In claim~\ref{item:suffCI}, conditional impatience \eqref{eq:conditionalImpatience} further implies $\nu_t(\Omega)\le 1$; in claim~\ref{item:suffHARA}, \eqref{eq:PhiHARA} applies. Hence Lemma~\ref{lem:multivariateAggregator} applies in either case.
For every saving level $s$ for which the expression is well-defined, let
\begin{align*}
	g_{t+1}(s)
	&\coloneqq\phi^{-1}\left(\E_t\left[\beta_{t+1}R_{t+1}
	\phi\left(c_{t+1}(R_{t+1}s+Y_{t+1})\right)\right]\right)\\
	&=\cA_{\nu_t}\left(c_{t+1}(R_{t+1}s+Y_{t+1})\right),
\end{align*}
where I suppress the state argument $\omega$ of $c_{t+1}$. By the induction hypothesis and $R_{t+1}>0$, the function $s\mapsto c_{t+1}(R_{t+1}s+Y_{t+1})$ is increasing and concave in every next-period state. Lemma~\ref{lem:multivariateAggregator} therefore implies that $g_{t+1}$ is increasing and concave. Define
\begin{equation*}
	W_{t+1}(s)\coloneqq s+g_{t+1}(s).
\end{equation*}
The function $W_{t+1}$ is strictly increasing and concave. Let $s_t^\circ\coloneqq W_{t+1}^{-1}$ on its range, which is the unconstrained saving function. The inverse of a strictly increasing concave function is increasing and convex. Moreover, if $w_1<w_2$, then
\begin{equation}
	0\le s_t^\circ(w_2)-s_t^\circ(w_1)\le w_2-w_1, \label{eq:savingLipschitz}
\end{equation}
because $W_{t+1}(s_2)-W_{t+1}(s_1)\ge s_2-s_1$ whenever $s_1<s_2$.

Suppose first that $b_t<\infty$. Feasibility of $w$ implies $w+b_t>\inf C$. If $w+b_t\in C$, the Euler inequality for current consumption can be written as
\begin{equation*}
	\phi(c_t(w))=\max\set{\E_t\left[\beta_{t+1}R_{t+1}
	\phi\left(c_{t+1}(R_{t+1}(w-c_t(w))+Y_{t+1})\right)\right],\phi(w+b_t)}.
\end{equation*}
An interior solution satisfies $c_t(w)=g_{t+1}(s_t(w))$ and $w=W_{t+1}(s_t(w))$, whereas a binding constraint gives $s_t(w)=-b_t$. Since $W_{t+1}$ is strictly increasing, the constraint binds exactly when $s_t^\circ(w)\le-b_t$.

If $w+b_t\ge\sup C$, the borrowing constraint is necessarily slack. Indeed, $w-s_t^\circ(w)=g_{t+1}(s_t^\circ(w))\in C$, so $s_t^\circ(w)>w-\sup C\ge-b_t$. If borrowing is unrestricted, the constraint is absent and $s_t=s_t^\circ$. Thus, in all cases,
\begin{equation*}
	s_t(w)=\max\set{-b_t,s_t^\circ(w)}.
\end{equation*}
Here the right-hand side is understood as $s_t^\circ(w)$ when $b_t=\infty$. The maximum of a constant and an increasing convex function is increasing and convex. It also preserves the upper bound in \eqref{eq:savingLipschitz}. Therefore $s_t$ is increasing, convex, and satisfies
\begin{equation*}
	0\le s_t(w_2)-s_t(w_1)\le w_2-w_1
\end{equation*}
whenever $w_1<w_2$. It follows that $c_t(w)=w-s_t(w)$ is increasing and concave. This completes the induction and proves both claims.
\end{proof}

\begin{proof}[Proof of Theorem~\ref{thm:unrestricted}]
If $u$ is HARA, Proposition~\ref{prop:sufficiency}\ref{item:suffHARA} implies that all consumption functions are concave. The converse is essentially contained in the proof of Lemma~3 of \citet{Toda2021JME}; I give a shorter argument using only the case $N=1$. Suppose consumption functions are concave in every problem in $\OS(\tilde{\beta},\tilde{R},\tilde{Y},\tilde{b})$. Let $\phi=u'$ and $\Phi$ be as in Lemma~\ref{lem:Phi}. Fix $\beta,m>0$, let $x\coloneqq\phi^{-1}(m)$, and choose $s_0<x$. Consider the deterministic one-period problem with unrestricted borrowing, discount factor $\beta$, and set $(R,Y)=(1,x-s_0)$. For the function $g$ in \eqref{eq:g}, we have
\begin{equation*}
	g(s_0+s)=\phi^{-1}\left(\beta\phi(x+s)\right).
\end{equation*}
By assumption, the consumption function is concave, so Lemma~\ref{lem:g} implies that the right-hand side is concave in a neighborhood of $s=0$. The calculation leading to \eqref{eq:Gconcave}, which does not use the restriction on the sum of the weights, applied at $s=0$ with $N=1$ and weight $\beta$ gives
\begin{equation}
	\Phi(\beta m)\ge \beta\Phi(m). \label{eq:PhiHomogeneousInequality}
\end{equation}
Since $\beta,m>0$ are arbitrary, applying \eqref{eq:PhiHomogeneousInequality} with $(\beta,m)$ replaced by $(1/\beta,\beta m)$ yields the reverse inequality. Therefore $\Phi(\beta m)=\beta\Phi(m)$ for all $\beta,m>0$, and hence $\Phi(m)=\kappa m$, where $\kappa\coloneqq\Phi(1)>0$.

Using absolute risk tolerance $\tau=-\phi/\phi'$ from \eqref{eq:riskTolerance}, the definition of $\Phi$ and the identity $\Phi(\phi(x))=\kappa\phi(x)$ imply
\begin{equation*}
	\tau'(x)=-1+\frac{\phi(x)\phi''(x)}{\phi'(x)^2}
	=-1+\frac{1}{\kappa}\eqqcolon a.
\end{equation*}
Thus $\tau(x)=ax+b$ for some $b\in\R$, so
\begin{equation*}
	-\frac{u''(x)}{u'(x)}=\frac{1}{ax+b}.
\end{equation*}
Solving $\phi'/\phi=-1/(ax+b)$ and applying the Inada conditions identifies $C$ with $\set{x\in\R:ax+b>0}$. Hence $u$ is HARA.
\end{proof}

\begin{proof}[Proof of Theorem~\ref{thm:main}]
If $-u''/u'''$ is concave, Proposition~\ref{prop:sufficiency}\ref{item:suffCI} implies that all consumption functions are concave. Conversely, suppose consumption functions are concave in every problem in the conditionally impatient class \eqref{eq:conditionallyImpatientClass}. This class includes the finite-support, one-period problems with unrestricted borrowing considered in Proposition~\ref{prop:necessity}. Hence $-u''/u'''$ is concave.
\end{proof}

\subsection{Perfect foresight and pathwise restrictions}\label{subsec:proofComparisons}

\begin{proof}[Proof of Proposition~\ref{prop:deterministic}]
Let $\phi=u'$. Under Assumption~\ref{asmp:utility}, $\phi$ is a twice continuously differentiable bijection from $C$ onto $(0,\infty)$ with $\phi'<0$, and $\tau=-\phi/\phi'$ is continuously differentiable.

First consider sufficiency. Fix a deterministic problem and a date $t<T$. For $j=t+1,\dots,T$, define the cumulative gross return and discounted gross return by
\begin{equation*}
	R_{t\to j}\coloneqq\prod_{k=t+1}^jR_k,
	\qquad
	\Lambda_{t\to j}\coloneqq\prod_{k=t+1}^j\beta_kR_k.
\end{equation*}
Iterating the Euler equation shows that future consumption, as a function of current consumption $x$, is
\begin{equation}
	h_{t,j}(x)\coloneqq\phi^{-1}\left(\frac{\phi(x)}{\Lambda_{t\to j}}\right). \label{eq:deterministicFutureConsumption}
\end{equation}
Dividing the derivative of $\phi(h_{t,j}(x))=\phi(x)/\Lambda_{t\to j}$ by this identity and using $\phi'/\phi=-1/\tau$ gives
\begin{equation*}
	h_{t,j}'(x)=\frac{\tau(h_{t,j}(x))}{\tau(x)}>0.
\end{equation*}
Differentiating once more yields
\begin{equation}
	h_{t,j}''(x)=\frac{\tau(h_{t,j}(x))}{\tau(x)^2}(\tau'(h_{t,j}(x))-\tau'(x)). \label{eq:deterministicFutureCurvature}
\end{equation}
This calculation uses only the smoothness in Assumption~\ref{asmp:utility}.

Since borrowing is unrestricted and $c_T=w_T$, the present-value budget constraint is
\begin{equation}
	w+\sum_{j=t+1}^T\frac{Y_j}{R_{t\to j}}
	=x+\sum_{j=t+1}^T\frac{h_{t,j}(x)}{R_{t\to j}}
	\eqqcolon H_t(x). \label{eq:deterministicPresentValue}
\end{equation}
The function $H_t$ is strictly increasing. The Inada conditions imply that each $h_{t,j}$ approaches the same endpoints of $C$ as $x$, so the range of $H_t$ is exactly the feasible interval of present-value resources. The consumption plan specified by \eqref{eq:deterministicFutureConsumption} and \eqref{eq:deterministicPresentValue} satisfies the first-order conditions and is the unique optimum by strict concavity. Therefore,
\begin{equation*}
	c_t(w)=H_t^{-1}\left(w+\sum_{j=t+1}^T\frac{Y_j}{R_{t\to j}}\right).
\end{equation*}
Since $H_t(c_t(w))=w+\sum_{j=t+1}^TY_j/R_{t\to j}$, differentiating with respect to $w$ gives
\begin{align*}
	H_t'(c_t(w))c_t'(w)&=1,\\
	H_t''(c_t(w))(c_t'(w))^2+H_t'(c_t(w))c_t''(w)&=0.
\end{align*}
Solving for $c_t''(w)$ and setting $x=c_t(w)$ yields
\begin{equation}
	c_t''(w)=-\frac{H_t''(x)}{H_t'(x)^3}. \label{eq:deterministicPolicyCurvature}
\end{equation}

If $\beta_kR_k\le1$ for every transition, then $\Lambda_{t\to j}\le1$ and $h_{t,j}(x)\le x$ by \eqref{eq:deterministicFutureConsumption}. For concave $\tau$, its derivative is decreasing, so \eqref{eq:deterministicFutureCurvature} implies $h_{t,j}''(x)\ge0$. Hence $H_t''(x)\ge0$ and $c_t''(w)\le0$. If instead $\beta_kR_k\ge1$ for every transition and $\tau$ is convex, then $h_{t,j}(x)\ge x$ and $\tau'$ is increasing, giving the same conclusion. Reversing the curvature of $\tau$ in either case gives $h_{t,j}''(x)\le0$ and therefore $c_t''(w)\ge0$. The terminal consumption function $c_T(w)=w$ is affine.

If $\tau$ has the corresponding strict curvature and every transition satisfies the relevant strict return inequality, then $h_{t,j}(x)\ne x$ and the comparison of derivatives in \eqref{eq:deterministicFutureCurvature} is strict. Equation \eqref{eq:deterministicPolicyCurvature} therefore gives strict consumption curvature for $t<T$. If $\tau$ is affine, all second derivatives in \eqref{eq:deterministicFutureCurvature} vanish, so each $c_t$ is affine. If every $\beta_kR_k=1$, then $h_{t,j}(x)=x$, which gives the same conclusion without a restriction on the curvature of $\tau$.

For necessity, it suffices to consider one-period problems with $R=1$ and $Y=0$. Take any $x,y\in C$ and set the initial wealth $w=x+y$ and discount factor $\beta=\phi(x)/\phi(y)$. Then $(c_0,c_1)=(x,y)$ satisfies the budget constraint and the Euler equation, and hence is the unique optimum. Fixing $\beta$, consider perturbing $w$. The Euler equation is then
\begin{equation*}
	\phi(x)-\beta\phi(w-x)=0.
\end{equation*}
Applying the implicit function theorem, we obtain
\begin{equation*}
	c_0'(w)=\frac{\partial x}{\partial w}=\frac{\beta\phi'(w-x)}{\phi'(x)+\beta\phi'(w-x)}=\frac{\tau(x)}{\tau(x)+\tau(y)},
\end{equation*}
where we have used $\tau=-\phi/\phi'$ and $y=w-x$. Differentiating once again, we obtain
\begin{equation*}
	c_0''(w)=\frac{\tau(x)\tau(y)}{(\tau(x)+\tau(y))^3}
		(\tau'(x)-\tau'(y)).
\end{equation*}
If $x>y$, then $\beta<1$, so uniform consumption concavity in part~\ref{item:deterministicImpatient} implies $\tau'(x)\le\tau'(y)$ for every such pair. Thus $\tau'$ is decreasing and $\tau$ is concave. If $x<y$, then $\beta>1$, and the same inequality implied by part~\ref{item:deterministicPatient} instead shows that $\tau'$ is increasing and $\tau$ is convex. Uniform consumption convexity reverses both inequalities and yields the opposite curvature conditions.
\end{proof}

\begin{proof}[Proof of Corollary~\ref{cor:deterministicAffine}]
If the consumption functions are affine in every problem in $\OS(\beta,R,Y,\infty)$, they are both concave and convex in every such problem satisfying $\beta_tR_t\le1$ at every transition. Proposition~\ref{prop:deterministic}\ref{item:deterministicImpatient} therefore implies that absolute risk tolerance $\tau$ is both concave and convex, hence affine. Under Assumption~\ref{asmp:utility}, this is equivalent to $u$ being HARA.

Conversely, if $u$ is HARA, then $\tau'$ is constant, so \eqref{eq:deterministicFutureCurvature} gives $h_{t,j}''=0$ for arbitrary paths of discounted returns. Equations~\eqref{eq:deterministicPresentValue} and \eqref{eq:deterministicPolicyCurvature} imply that $c_t''=0$ for every $t<T$. The terminal consumption function $c_T(w)=w$ is also affine.
\end{proof}

\begin{proof}[Proof of Proposition~\ref{prop:reverseImpatience}]
I first establish a consequence common to both parts. Let $\phi=u'$ and $\Phi$ be as in Lemma~\ref{lem:Phi}. Both hypotheses include every finite-support one-period problem with unrestricted borrowing and $\beta R=1$ almost surely. In the construction \eqref{eq:construction} with $\lambda=1$, we have $\beta_n=1/v_n$ and $R_n=v_n$, so the product equals one in every state. The proof of Proposition~\ref{prop:necessity} therefore gives concavity of $G(\cdot;p,x,v)$ whenever $\sum p_n=1$. The equality-case argument \eqref{eq:CSequality} in the proof of Lemma~\ref{lem:Phi} yields
\begin{equation*}
	\Phi\left(\sum_{n=1}^N p_nm_n\right)\ge\sum_{n=1}^N p_n\Phi(m_n)
\end{equation*}
for every $N\in\N$ and $p,m\in\R_{++}^N$ with $\sum p_n=1$. Hence $\Phi$ is concave.

For part~\ref{item:pathwisePatient}, the deterministic one-period construction in the proof of Theorem~\ref{thm:unrestricted} has $R=1$, so it satisfies the required pathwise restriction whenever $\beta\ge1$. The same calculation therefore gives
\begin{equation*}
	\Phi(\beta m)\ge\beta\Phi(m),\qquad \beta\ge1,\quad m>0.
\end{equation*}
For the reverse inequality, fix $\beta>1$ and $m>0$. Concavity and positivity of $\Phi$ imply, for every $\epsilon>0$,
\begin{equation*}
	\Phi\left(m+(1-\beta^{-1})\epsilon\right)
	\ge\beta^{-1}\Phi(\beta m)+(1-\beta^{-1})\Phi(\epsilon)
	\ge\beta^{-1}\Phi(\beta m).
\end{equation*}
Letting $\epsilon\downarrow0$ and using continuity gives $\Phi(\beta m)\le\beta\Phi(m)$. Equality is immediate when $\beta=1$. Thus $\Phi(\beta m)=\beta\Phi(m)$ for all $\beta\ge1$ and $m>0$, which implies $\Phi(m)=\kappa m$ with $\kappa=\Phi(1)>0$. The final argument in the proof of Theorem~\ref{thm:unrestricted} then implies that $u$ is HARA.

For part~\ref{item:pathwiseImpatient}, the concavity of $\Phi$ established at the start of the proof and Lemma~\ref{lem:PhiRatio} imply that $\psi$ is concave.
\end{proof}

\section*{Declaration of generative AI and AI-assisted technologies in the manuscript preparation process}

During the preparation of this work, the author used ChatGPT (OpenAI) to assist with language editing, manuscript organization, literature searches, and checks of exposition and internal consistency. The author reviewed and edited all AI-assisted output and takes full responsibility for the content of the published article.

\printbibliography

\end{document}